\newtheorem{theorem}{Theorem}
\newtheorem{corollary}{Corollary}
\newtheorem{lemma}{Lemma}
\newtheorem{definition}{Definition}
\newtheorem{remark}{Remark}
\newcommand{\cshere}[1]{\begin{center}\textcolor{red}{\bf--- CS stopped his edits here ---}\end{center}}
\newcommand{\reals}{ \mathbb{R} }
\newcommand{\la}{\langle }
\newcommand{\ra}{\rangle}
\newcommand{\comps}{\mathbb{C}}
\DeclareMathOperator*{\ang}{angle}
\DeclareMathOperator*{\phase}{phase}
\newcommand{\rsphere}[1]{\setS^{{#1}-1}_\Real}
\newcommand{\csphere}[1]{\setS^{{#1}-1}_\comps}
\newcommand{\hsphere}[1]{\setS^{{#1}-1}_\setH}
\renewcommand{\eqref}[1]{(\ref{#1})}
\newcommand{\secref}[1]{\mbox{Section~\ref{#1}}}
\newcommand{\figref}[1]{\mbox{Figure~\ref{#1}}}
\newcommand{\tblref}[1]{\mbox{Table~\ref{#1}}}
\newcommand{\thmref}[1]{\mbox{Theorem~\ref{#1}}}
\newcommand{\lemref}[1]{\mbox{Lemma~\ref{#1}}}
\newcommand{\corref}[1]{\mbox{Corollary~\ref{#1}}}
\renewcommand{\mid}{\ensuremath{\,|\,}}
\newcommand{\diag}[1]{\ensuremath{\mathrm{diag}\!\left(#1\right)}}
\newcommand{\Real}{\ensuremath{\mathbb{R}}}
\newcommand{\Complex}{\ensuremath{\mathbb{C}}}
\newcommand{\setA}{\ensuremath{\mathcal{A}}}
\newcommand{\setB}{\ensuremath{\mathcal{B}}}
\newcommand{\setC}{\ensuremath{\mathcal{C}}}
\newcommand{\setD}{\ensuremath{\mathcal{D}}}
\newcommand{\setH}{\ensuremath{\mathcal{H}}}
\newcommand{\setP}{\ensuremath{\mathcal{P}}}
\newcommand{\setS}{\ensuremath{\mathcal{S}}}
\newcommand{\bma}{\ensuremath{\mathbf{a}}}
\newcommand{\bmb}{\ensuremath{\mathbf{b}}}
\newcommand{\bmx}{\ensuremath{\mathbf{x}}}
\newcommand{\bmy}{\ensuremath{\mathbf{y}}}
\newcommand{\bmz}{\ensuremath{\mathbf{z}}}
\newcommand{\bmdelta}{\ensuremath{\boldsymbol\delta}}
\newcommand{\bA}{\ensuremath{\mathbf{A}}}
\newcommand{\bB}{\ensuremath{\mathbf{B}}}
\newcommand{\bDelta}{\ensuremath{\mathbf{\Delta}}}
\newcommand{\half}{\frac{1}{2}}
\newcommand{\alns}[1]{\begin{align*}#1\end{align*}}
\newcommand{\aln}[1]{\begin{align}#1\end{align}}
\DeclareMathOperator*{\st}{subject\,\,to}
\newcommand{\matb}{\left( \begin{matrix*}[r] }
\newcommand{\mate}{\end{matrix*}\right)}
\newcommand{\opt}{^\star}
\DeclareMathOperator*{\minimize}{minimize}
\DeclareMathOperator*{\maximize}{maximize}
\newcommand{\revised}[1]{#1} 
\begin{document}

\title{\scalebox{0.89}{PhaseMax: Convex Phase Retrieval via Basis Pursuit}}\author{Tom Goldstein and Christoph Studer\thanks{T. Goldstein is with the Department of Computer Science, University of Maryland, College Park, MD (e-mail: \url{ tomg@cs.umd.edu}).}\thanks{C.~Studer is with the School of Electrical and Computer Engineering, Cornell University, Ithaca, NY (e-mail: \url{studer@cornell.edu}).}
\thanks{This paper was presented in part at the 32th International Conference on Machine Learning (ICML) \cite{pmlr-v70-goldstein17a}.}
\thanks{The work of T.~Goldstein was supported in part by the US National Science Foundation (NSF) under grant CCF-1535902, the US Office of Naval Research under grant N00014-17-1-2078, and by the Sloan Foundation. The work of C. Studer was supported in part by Xilinx, Inc. and by the US NSF under grants ECCS-1408006, CCF-1535897,  CNS-1717559, and CAREER CCF-1652065.}}

\maketitle

\begin{abstract}
We consider the recovery of a (real- or complex-valued) signal from magnitude-only measurements, known as phase retrieval. We formulate phase retrieval as a convex optimization problem, which we call PhaseMax. Unlike other convex methods that use semidefinite relaxation and lift the phase retrieval problem to a higher dimension, PhaseMax is a ``non-lifting'' relaxation that operates in the original signal dimension. We show that the dual problem to PhaseMax is Basis Pursuit, which implies that phase retrieval can be performed using algorithms initially designed for sparse signal recovery. We develop sharp lower bounds on the success probability of PhaseMax for a broad range of random measurement ensembles, and we analyze the impact of measurement noise on the solution accuracy. We use numerical results to demonstrate the accuracy of our recovery guarantees, and we showcase the efficacy and limits of PhaseMax in practice. 
\end{abstract}



\section{Introduction}
\label{sec:introduction}
Phase retrieval is concerned with the recovery of an $n$-dimensional signal $\bmx^0\in\setH^n$, with $\setH$ either~$\Real$ or~$\Complex$, from $m\geq n$ squared-magnitude, noisy measurements~\cite{candes2013phaselift} 
\aln{ \label{eq:original} 
b_i^2 = |\la\bma_i, \bmx^0\ra|^2 + \eta_i, \quad i = 1,2,\ldots,m,
}
where $\bma_i\in\setH^n$, $i=1,2,\ldots,m$, are the (known) measurement vectors and $\eta_i\in\Real$, $i=1,2,\ldots,m$, models measurement noise. 
Let $\hat\bmx\in\setH^n$ be an approximation vector\footnote{Approximation vectors can be obtained via a variety of algorithms, or can even be chosen at random. See \secref{sec:approx}.} to the true signal $\bmx^0$. We recover the signal~$\bmx^0$ by solving the following convex problem we call {\em PhaseMax}:  
\alns{   
\text{(PM)} \quad \left\{\begin{array}{ll} 
\underset{\bmx\in\setH^n}{\maximize} &  \langle \bmx,\hat \bmx \rangle_\Re     \\
\st   & |\la\bma_i,\bmx\ra| \le b_i, \quad i=1,2,\ldots,m.
\end{array}\right.
}
Here, $\langle \bmx,\hat \bmx \rangle_\Re$ denotes the real-part of the inner product between the vectors $\bmx$ and $\hat\bmx$. The main idea behind PhaseMax is to find the vector $\bmx$ that is most aligned with the approximation vector $\hat\bmx$ and satisfies a convex relaxation of the measurement constraints in \eqref{eq:original}. 

Our main goal is to develop sharp lower bounds on the probability with which PhaseMax succeeds in recovering the true signal~$\bmx^0$, up to an arbitrary phase ambiguity that does not affect the measurement constraints in~\eqref{eq:original}.
By assuming noiseless measurements, one of our main results is as follows.
\begin{theorem} \label{thm:reconC}
Consider the case of recovering a complex-valued signal $\bmx\in\Complex^n$ from $m$ noiseless measurements of the form \eqref{eq:original} with  measurement vectors $\bma_i$, $i=1,2,\ldots,m$, sampled independently and uniformly from the unit sphere.  Let  
\begin{align*}
\ang( \bmx^0,\hat\bmx) = \arccos\!\left( \frac{\langle \bmx^0, \hat\bmx\rangle_\Re}{\|\bmx^0\|_2\|\hat\bmx\|_2} \right)
\end{align*}
be the angle between the true vector $\bmx^0$ and the approximation~$\hat\bmx$, and define the constant 
\begin{align*}
\alpha= 1-\textstyle\frac{2}{\pi}\ang( \bmx^0,\hat\bmx)
\end{align*}
that measures the approximation accuracy. Then, the probability  that PhaseMax recovers the true signal~$\bmx^0$, denoted by $p_\Complex(m,n)$, is bounded from below as follows:  
\aln{ \label{eq:pcomplex}
p_\Complex(m,n) \ge 1-\exp\!\left(\! - \frac{\left(\alpha m -4n\right)^2}{2m} \right) 
}
whenever $\alpha m >4n.$
\end{theorem}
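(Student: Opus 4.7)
The overall strategy is to reduce recovery of $\bmx^0$ to a dual-certificate condition on the random measurement vectors, and then bound the failure probability by Hoeffding-type concentration combined with a covering argument. Normalize $\|\bmx^0\|_2 = \|\hat\bmx\|_2 = 1$ and absorb a global phase into $\hat\bmx$ so that $\langle\bmx^0,\hat\bmx\rangle_\Re = \cos\theta$, where $\theta = \ang(\bmx^0,\hat\bmx)$. Since (PM) is a convex program with a compact feasible set, $\bmx^0$ is its unique optimizer if and only if no nonzero $\bmd\in\Complex^n$ simultaneously satisfies the linearized feasibility condition $\Re(\bar y_i\langle\bma_i,\bmd\rangle)\leq 0$ for every $i$ (where $y_i = \langle\bma_i,\bmx^0\rangle$) and the strict ascent condition $\langle\bmd,\hat\bmx\rangle_\Re > 0$. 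Equivalently, recovery succeeds iff $\hat\bmx$ lies in the conic hull of $\{\bar y_i\bma_i\}_{i=1}^m$, so the failure event is the existence of a witness $\bmd\in\csphere{n}$ lying in both the linearized feasible cone and the open ascent half-space.

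For a fixed candidate bad direction $\bmd$, the variables $\xi_i(\bmd) = \Re(\bar y_i\langle\bma_i,\bmd\rangle)$, $i=1,\dots,m$, are i.i.d.\ and bounded. Using rotational invariance of the uniform measure on the complex sphere, the joint distribution of $(\langle\bma_i,\bmx^0\rangle,\langle\bma_i,\hat\bmx\rangle,\langle\bma_i,\bmd\rangle)$ reduces to a low-dimensional angular problem, and a careful calculation establishes a lower bound on the ``drift'' of an appropriate aggregated statistic of the $\xi_i$. Specifically, one can show that for every $\bmd$ in the ascent half-space this statistic has mean at least $\alpha m$, with the constant $\alpha = 1-2\theta/\pi$ emerging as the angular fraction of the sphere on which the sign condition fails. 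Hoeffding's inequality then bounds the feasibility probability of $\bmd$ by a term of the form $\exp(-(\alpha m)^2/(2m))$. To upgrade the pointwise bound to a uniform one, I would cover $\csphere{n}$ by an $\epsilon$-net $\setN$ of cardinality at most $(3/\epsilon)^{2n}$; since $\xi_i(\bmd)$ and $\langle\bmd,\hat\bmx\rangle_\Re$ are $O(1)$-Lipschitz in $\bmd$, the existence of a bad $\bmd$ forces some net point to satisfy a slightly relaxed bad condition. The union bound over $\setN$ contributes $\log|\setN|\leq 2n\log(3/\epsilon)$ to the exponent, which for a suitable $\epsilon$ is absorbed as an additive $-4n$ shift \emph{inside} the Hoeffding exponent, producing $p_\Complex(m,n)\geq 1-\exp(-(\alpha m - 4n)^2/(2m))$ whenever $\alpha m>4n$.

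The main obstacle is the sharp concentration step: extracting the exact drift constant $\alpha = 1 - 2\theta/\pi$ requires an angle-dependent computation on the complex sphere rather than a generic lower bound, and coordinating the covering resolution so that the $2n$ covering overhead enters additively (as $4n$) inside the square of the exponent, rather than as a prefactor outside, is where the proof's precision lies. Once the per-direction Hoeffding rate and the net overhead are balanced correctly, the theorem follows by routine assembly.
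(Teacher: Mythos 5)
Your reduction of recovery to a certificate condition is essentially right (it is a restatement of the paper's deterministic Theorem~\ref{thm:opt}: failure means some aligned ascent direction $\bmdelta$ has $[\la\bma_i,\bmx^0\ra^*\la\bma_i,\bmdelta\ra]_\Re\le 0$ for all $i$), but the probabilistic half of your plan has a genuine gap: an $\epsilon$-net plus union bound cannot produce the stated bound $1-\exp\!\big(-(\alpha m-4n)^2/(2m)\big)$. Two specific problems. First, your per-direction estimate is of the wrong shape. For a fixed bad direction $\bmd$, the failure event is that \emph{all} $m$ signs $\xi_i(\bmd)$ come out non-positive; for generic $\bmd$ each sign is an (almost) fair coin, so this probability is essentially $2^{-m}=\exp(-m\log 2)$, not $\exp(-(\alpha m)^2/(2m))=\exp(-\alpha^2 m/2)$ with the dimension waiting to be inserted additively. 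The form $\exp(-(\alpha m - cn)^2/(2m))$ is the Hoeffding tail of a \emph{binomial count} needing at least $n$ successes out of $m$ trials of success probability $\alpha/2$ --- it does not arise as (pointwise failure probability) $\times$ (net cardinality). Second, the bookkeeping for the net cannot close: a net of size $(3/\epsilon)^{2n}$ adds $2n\log(3/\epsilon)$ to the log-failure probability as a \emph{prefactor}, and to absorb that into the $-4n$ inside the square you would need $\log(3/\epsilon)\lesssim 2\alpha$, i.e.\ $\epsilon\gtrsim 0.4$; at that resolution the Lipschitz relaxation of the sign condition destroys the margin you are concentrating on. This style of argument is exactly what Bahmani and Romberg~\cite{bahmani2017phase} carry out rigorously for the same program, and it provably yields the much weaker sample complexity $m\gtrsim \sin^{-4}(\alpha)\log(\cdot)\,n$ quoted in Table~\ref{tbl:comparison} --- so the approach proves \emph{a} recovery theorem, but not this one.

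The paper's route avoids the union bound entirely. It rephrases the certificate condition as a sphere-covering statement: the semispherical caps $\setC(\tilde\bma_i)$ centered at the aligned vectors $\tilde\bma_i=\phase(\la\bma_i,\bmx^0\ra)\bma_i$ must cover the half-sphere of aligned ascent directions. The probability of covering is then computed (not bounded) using Wendel/Schl\"afli region counting: $m$ hyperplanes through the origin cut the sphere into exactly $r(n,m)=2\sum_{i=0}^{n-1}\binom{m-1}{i}$ regions, so for any \emph{symmetric} sampling set the non-covering probability is exactly $r(n,m)/2^{m}$ (Lemma~\ref{lem:coins}). A reflection construction (Lemma~\ref{lem:neighbor}) handles the fact that the $\tilde\bma_i$ live on a half-sphere and must cover the \emph{neighboring} cap $\setC(\hat\bmx)$; the constant $\alpha$ appears as the exact measure of the ``hourglass'' region $\setA$ cut out by the planes normal to $\bmx^0$ and its reflection, each point of which contributes one fair coin. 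Hoeffding applied to that binomial tail gives $\exp(-(\alpha m-2n)^2/(2m))$, and the complex case is reduced to a real sphere of dimension $2n-1$ (weakened to $2n$) by projecting ascent directions onto $\{\bmdelta:\la\bmdelta,\bmx^0\ra_\Im=0\}$ --- which is where the $4n$ comes from, not from a covering-number overhead. If you want to pursue your own plan, you should either accept the lossy constants of the learning-theoretic route or replace the net argument with this exact combinatorial covering computation.
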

In words, if $m> 4n/\!\alpha$ and $\alpha>0$, then PhaseMax will succeed with non-zero probability. Furthermore, for a fixed signal dimension $n$ and an arbitrary approximation vector $\hat\bmx$ that satisfies $\ang(\bmx^0,\hat\bmx)<\frac{\pi}{2}$, i.e., one that is not orthogonal to the vector $\bmx^0$, we can make the success probability of PhaseMax arbitrarily close to one by increasing the number of measurements~$m$. As we shall see, our recovery guarantees  are sharp and accurately predict the performance of PhaseMax in practice.  

\revised{We emphasize that the convex formulation~(PM) has been studied by several other authors, including Bahmani and Romberg \cite{bahmani2017phase}, whose work appeared shortly before our own. Related work will be discussed in detail in Section \ref{sec:related}.}

\subsection{Convex Phase Retrieval via Basis Pursuit}
It is quite intriguing that the following Basis Pursuit problem \cite{chen2001atomic,chen1994basis}
\alns{   
\text{(BP)} \quad \left\{\begin{array}{ll} 
\underset{\bmz\in\setH^m}{\minimize} &  \|\bmz\|_1     \\
\st   &\hat\bmx=\bA \bB^{-1}\bmz,
\end{array}\right.
}
with $\bB=\diag{b_1,b_2,\ldots,b_m}$ and $\bA=[\bma_1,\bma_2,\ldots,\bma_m]$ is the dual problem to (PM); see, e.g., \cite[Lem.~1]{studer2012signal}.
As a consequence, if PhaseMax succeeds, then the phases of the solution vector $\bmz\in\setH^m$ to (BP) are exactly the phases that were lost in the measurement process in~\eqref{eq:original}, i.e., we have 
\begin{align*}
y_i=\phase(z_i)b_i=\la\bma_i,\bmx^0\ra, \quad i=1,2,\ldots,m,
\end{align*}
with $\phase(z)=z/|z|$ for $z\neq0$ and $\phase(0)=1.$  This observation not only reveals a fundamental connection between phase retrieval and sparse signal recovery, but also implies that Basis Pursuit solvers can be used to recover the signal from the phase-less measurements in \eqref{eq:original}.

\subsection{\revised{A Brief History of Phase Retrieval}}
\label{sec:priorart}
Phase retrieval is a well-studied problem with a long history \cite{gerchberg1972practical,fienup1982phase} and enjoys widespread use in applications such as X-ray crystallography \cite{harrison1993phase,miao2008extending,pfeiffer2006phase}, microscopy \cite{kou2010transport,faulkner2004movable}, imaging~\cite{holloway2016}, and many more~\cite{fogel2013phase,candes2015phasea,jaganathan2015phase,tian20153d}. 
Early algorithms, such as the  Gerchberg-Saxton \cite{gerchberg1972practical} or Fienup \cite{fienup1982phase} algorithms, rely on alternating projection to recover complex-valued signals from magnitude-only measurements.   
The papers~\cite{candes2013phaselift,candes2014solving,candes2015phase} sparked new interest in the phase retrieval problem by showing that  it can be relaxed to a semidefinite program. Prominent convex relaxations include PhaseLift \cite{candes2013phaselift} and PhaseCut \cite{waldspurger2015phase}.  These methods come with recovery guarantees, but require the problem to be lifted to a higher dimensional space, which prevents their use for large-scale problems.
\revised{Several authors recently addressed this problem by presenting efficient methods that can solve large-scale semidefinite programs without the complexity of representing the full-scale (or lifted) matrix of unknowns. These approaches include gauge duality methods \cite{friedlander2014gauge,aravkin2017foundations} and sketching methods~\cite{yurtsever2017sketchy}.}

\revised{More recently, a number of \emph{non-convex} algorithms have been proposed (see e.g.,  \cite{netrapalli2013phase,schniter2015compressive,candes2015wirtinger,chen2015solving,wang2016solving,wei2015solving,zeng2017coordinate}) that directly operate in the original signal dimension and exhibit excellent empirical performance.
The algorithms in~\cite{netrapalli2013phase,candes2015wirtinger,chen2015solving,wang2016solving,yuan2017phase} come with recovery guarantees that mainly rely on accurate initializers, such as the (truncated) spectral initializer \cite{netrapalli2013phase,chen2015solving}, the Null initializer~\cite{chen2015phase}, the orthogonality-promoting method~\cite{wang2016solving}, or more recent initializers that guarantee optimal performance \cite{lu2017phase,mondelli2017fundamental} (see \secref{sec:approx} for additional details). These initializers enable non-convex phase retrieval algorithms to succeed, given a sufficiently large number of measurements; see~\cite{sun2016geometric} for more details on the geometry of such non-convex problems. 
}

\subsection{\revised{Related Work on Non-Lifting Convex Methods}}
\label{sec:related}
\revised{Because of the extreme computational burden of traditional convex relaxations (which require ``lifting'' to a higher dimension), a number of authors have recently been interested in non-lifting convex relaxations for phase retrieval.  Shortly before the appearance of this work, the formulation (PM) was studied by Bahmani and Romberg~\cite{bahmani2017phase}.  Using methods from machine learning theory, the authors derived bounds on the recovery of signals with and without noise.  The results in \cite{bahmani2017phase} are stronger than those presented here in that they are uniform with respect to the approximation vector $\hat \bmx,$ but weaker in the sense that they require significantly more measurements.  
}

\revised{Several authors have studied PhaseMax after the initial appearance of this work.   An alternative proof of accurate signal recovery was derived using measure concentration bounds in \cite{hand2016elementary}.  Compressed sensing methods for sparse signals \cite{hand2016compressed}, and corruption-robust methods for noisy signals \cite{hand2016corruption} have also been presented.  The closely related non-lifting relaxation BranchHull has also been proposed for recovering signals from entry-wise product measurements~\cite{aghasi2017branchhull}.
}

\revised{Finally, we note that recent works have proved tight asymptotic bounds for PhaseMax in the asymptotic limit, i.e., where $\beta=m/n$ is constant and $m\to\infty$. The authors of \cite{dhifallah2017phase} derive an exact asymptotic bound of the performance of PhaseMax, and a related non-rigorous analysis was given in \cite{dhifallah2017fundamental}.  It was also shown that better signal recovery guarantees can be obtained by iteratively applying PhaseMax.  The resulting method, called PhaseLamp, was analyzed in \cite{dhifallah2017phase}.} 



\subsection{Contributions and Paper Outline}
In contrast to algorithms relying on semidefinite relaxation or non-convex problem formulations, we propose \emph{PhaseMax}, a novel, convex method for phase retrieval that directly operates in the original signal dimension.
In \secref{sec:optimality}, we establish a deterministic condition that guarantees uniqueness of the solution to the (PM) problem.
We borrow methods from geometric probability to derive sharp lower bounds on the success probability for real- and complex-valued systems in \secref{sec:preliminaries}. \secref{sec:generalizations} generalizes our results to a broader range of random measurement ensembles and to systems with measurement noise.
We show in \secref{sec:approx} that randomly chosen approximation vectors are sufficient to ensure faithful recovery, given a sufficiently large number of measurements.
We numerically demonstrate the sharpness of our recovery guarantees and showcase the practical limits of PhaseMax in \secref{sec:discussion}.
We conclude in \secref{sec:conclusions}.

\subsection{Notation}
Lowercase and uppercase boldface letters stand for column vectors and matrices, respectively. For a complex-valued matrix~$\bA$, we denote its transpose and  Hermitian transpose by $\bA^T$ and $\bA^*$, respectively; the real and imaginary parts are $\bA_\Re$ and $\bA_\Im$.
The $i$th column of the matrix $\bA$ is denoted by~$\bma_i$ and the $k$th entry of the $i$th vector $\bma_i$ is $[\bma_i]_k$; for a vector~$\bma$ without index, we simply denote the $k$th entry by~$a_k$.  We define the inner product between two complex-valued vectors~$\bma$ and~$\bmb$ as $\langle\bma,\bmb\rangle=\bma^*\bmb$. We use $j$ to denote the imaginary unit.
The $\ell_2$-norm and $\ell_1$-norm of the vector~$\bma$ are $\|\bma\|_2$ and $\|\bma\|_1$, respectively. 


\section{Uniqueness Condition}
\label{sec:optimality}

The measurement constraints in \eqref{eq:original} do not uniquely define a vector. If  $\bmx$ is a vector that satisfies \eqref{eq:original}, then any vector $\bmx'=e^{j\phi}\bmx$ for $\phi\in[0,2\pi)$ also satisfies the constraints.
In contrast, if~$\bmx$ is a solution to (PM), then $e^{j\phi}\bmx$  with $\phi\neq0$ will {\em not} be another solution.
In fact, consider any vector~$\bmx$ in the feasible set of (PM) with $\langle  \bmx,\hat \bmx\rangle_\Im \neq 0$.  By choosing $\omega = \phase(\langle \bmx,\hat\bmx\rangle)$, we have 
$$\la\omega\bmx,\hat\bmx \ra_\Re  = |\la \bmx,\hat\bmx\ra| > \la \bmx ,\hat\bmx\ra_\Re,$$
which implies that given such a vector $\bmx,$ one can always increase the objective function of (PM) simply by {\em aligning} $\bmx$ with the approximation $\hat \bmx$ (i.e., modifying its phase so that $\la\bmx,\hat\bmx\ra$ is real valued).  The following definition makes this observation rigorous.
\begin{definition}
A vector $\bmx$ is said to be {\em aligned} with another vector $\hat \bmx$, if the inner product $\la\bmx,\hat\bmx\ra$ is real-valued and non-negative.
\end{definition}
From all the vectors that satisfy the measurement constraints in \eqref{eq:original}, there is only one that is a candidate solution to the convex  problem (PM), which is also the solution that is aligned with $\hat \bmx.$  For this reason, we adopt the following important convention throughout the rest of this paper.  
\begin{framed}
\centering
The true vector $\bmx^0$ denotes a solution to \eqref{eq:original} that is aligned with the approximation vector $\hat \bmx.$
\end{framed}
\begin{remark}
There is an interesting relation between the convex formulation of PhaseMax and the semidefinite relaxation method PhaseLift~\cite{candes2013phaselift,candes2014solving,candes2015phase}. Recall that the set of solutions to any convex problem is always convex.  However, the solution set of the measurement constraints \eqref{eq:original} is invariant under phase rotations, and thus non-convex (provided it is non-zero). It is therefore impossible to design a convex problem that yields this set of solutions.  PhaseMax and PhaseLift differ in how they remove the phase ambiguity from the problem to enable a convex formulation.  Rather than trying to identify the true vector $\bmx^0,$ PhaseLift reformulates the problem in terms of the quantity $\bmx^0(\bmx^0)^H,$ which is unaffected by phase rotations in $\bmx^0.$  Hence, PhaseLift removes the rotation symmetry from the solution set, yielding a problem with a convex set of solutions.  PhaseMax does something much simpler: it {\em pins down} the phase of the solution to an arbitrary quantity, thus removing the phase ambiguity and restoring convexity to the solution set.  This arbitrary phase choice is made when selecting the phase of the approximation $\hat \bmx.$
\end{remark}

We are now ready to state a deterministic condition under which PhaseMax succeeds in recovering the true vector~$\bmx^0.$  The result applies to the noiseless case, i.e., $\eta_i=0$, $i=1,2,\ldots,m$. In this case, all inequality constraints in (PM) are active at $\bmx^0.$ The noisy case will be discussed in \secref{sec:noise}.

\begin{theorem} \label{thm:opt}
The true vector $\bmx^0$ is the unique maximizer of (PM) if, for any unit vector $\bmdelta\in\setH^n$ that is aligned with the approximation $\hat \bmx,$
 \alns{
    \exists i,   [ \la \bma_i,\bmx^0 \ra^* \la \bma_i, \bmdelta\ra ]_\Re > 0.
 }
\end{theorem}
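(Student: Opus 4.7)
The plan is to prove the contrapositive: assuming $\bmx^0$ is not the unique maximizer of (PM), produce an aligned unit $\bmdelta$ with $[\langle \bma_i,\bmx^0\rangle^*\langle \bma_i,\bmdelta\rangle]_\Re \le 0$ for every $i$, contradicting the strict-positive hypothesis.

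First I would invoke the alignment reduction. The feasible set of (PM) is invariant under $\bmx\mapsto e^{j\phi}\bmx$ since the constraints only depend on $|\langle \bma_i,\bmx\rangle|$, while the objective $\langle \bmx,\hat\bmx\rangle_\Re$ is strictly increased by rotating any unaligned vector into alignment with $\hat\bmx$. Hence every maximizer of (PM) may be assumed aligned. Failure of uniqueness then yields a feasible aligned $\bmx^1 \ne \bmx^0$ with $\langle \bmx^1,\hat\bmx\rangle_\Re \ge \langle \bmx^0,\hat\bmx\rangle_\Re$. Setting $\bmdelta' = \bmx^1 - \bmx^0$, the fact that both endpoints are aligned gives $\langle \bmdelta',\hat\bmx\rangle = \langle \bmx^1,\hat\bmx\rangle - \langle \bmx^0,\hat\bmx\rangle \in \Real_{\ge 0}$, so the unit direction $\bmdelta = \bmdelta'/\|\bmdelta'\|_2$ is aligned with $\hat\bmx$ in the sense of the preceding Definition.

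The main step is a first-order feasibility expansion along the segment $\bmx^0 + t\bmdelta'$, $t\in[0,1]$, which lies in the feasible set by convexity. In the noiseless regime every constraint is active at $\bmx^0$, i.e., $|\langle \bma_i,\bmx^0\rangle|^2 = b_i^2$, so the inequality $|\langle \bma_i,\bmx^0 + t\bmdelta'\rangle|^2 \le b_i^2$ reduces, after cancelling the constant term, to
\begin{align*}
2t\,[\langle \bma_i,\bmx^0\rangle^*\langle \bma_i,\bmdelta'\rangle]_\Re + t^2 |\langle \bma_i,\bmdelta'\rangle|^2 \le 0
\end{align*}
for every $i$ and every $t\in(0,1]$. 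Dividing by $t > 0$ and sending $t \to 0^+$ yields $[\langle \bma_i,\bmx^0\rangle^*\langle \bma_i,\bmdelta'\rangle]_\Re \le 0$ for all $i$; positive scaling by $1/\|\bmdelta'\|_2$ transfers the inequality to the unit vector $\bmdelta$, contradicting the hypothesis.

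The main obstacle I anticipate is the alignment reduction: one must carefully verify that a competing maximizer can be rotated into alignment (preserving feasibility) and that the resulting $\bmdelta'$ inherits alignment, which requires the combination of both endpoints being aligned \emph{and} $\bmx^1$ having at least as large an objective, so that $\langle \bmdelta',\hat\bmx\rangle$ is not merely real but also non-negative. The remainder of the argument is the routine quadratic expansion displayed above, with no further subtleties.
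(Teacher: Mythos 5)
Your proposal is correct and follows essentially the same route as the paper: take a competing aligned feasible point with at least as large an objective, form the difference $\bDelta=\bmx^1-\bmx^0$, expand the active constraint $|\langle\bma_i,\bmx^0+\bDelta\rangle|^2\le b_i^2$, and conclude $[\langle\bma_i,\bmx^0\rangle^*\langle\bma_i,\bmdelta\rangle]_\Re\le 0$ for all $i$, contradicting the hypothesis. The only cosmetic difference is your $t\to 0^+$ limit along the segment, which the paper avoids by noting directly at $t=1$ that $2[\cdot]_\Re+|\langle\bma_i,\bDelta\rangle|^2\le 0$ already forces $[\cdot]_\Re\le-\frac{1}{2}|\langle\bma_i,\bDelta\rangle|^2\le 0$.
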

\begin{proof}

Suppose the conditions of this theorem hold, and consider some candidate solution $\bmx'$ in the feasible set for (PM) with $\langle \bmx',\hat \bmx \rangle \ge  \langle \bmx^0,\hat \bmx \rangle.$ Without loss of generality, we assume~$\bmx'$ to be aligned with~$\hat \bmx.$ Then, the vector $\bDelta = \bmx'-\bmx^0$ is also aligned with $\hat \bmx$, and satisfies 
$$\langle \bDelta,\hat \bmx \rangle = \langle \bmx',\hat \bmx \rangle- \langle \bmx^0,\hat \bmx \rangle \ge 0.$$
Since $\bmx'$ is a feasible solution for (PM), we have
\begin{align*}
& |\la \bma_i, \bmx^0+\bDelta \ra|^2 =  \\
& \quad |\la \bma_i, \bmx^0 \ra|^2+2 [ \la \bma_i, \bmx^0 \ra^* \la \bma_i, \bDelta \ra]_\Re+ |\la\bma_i,\bDelta\ra|^2 \le b_i^2, \,\, \forall i .
\end{align*}
But  $|\bma_i^T\bmx^0|^2=b_i^2,$ and so
$$ [ \la \bma_i, \bmx^0 \ra^* \la \bma_i, \bDelta \ra]_\Re \le - \textstyle \frac{1}{2} |\la\bma_i, \bDelta\ra|^2 \le 0  , \,\, \forall i.  $$
Now, if $\|\bDelta\|_2>0,$ then the unit-length vector $\bmdelta=\bDelta/\|\bDelta\|_2$ satisfies $ [ \la \bma_i, \bmx^0 \ra^* \la \bma_i, \bmdelta \ra]_\Re \le 0$ for all $i,$ which contradicts the hypothesis of the theorem. It follows that $\|\bDelta\|_2=0$ and $\bmx'=\bmx^0.$
\end{proof}

Theorem \ref{thm:opt} has an intuitive geometrical interpretation.  If $\bmx^0$ is an optimal point and $\bmdelta$ is an ascent direction, then one cannot move in the direction of $\bmdelta$ starting at  $\bmx^0$ without leaving the feasible set.  This condition is met if there is an  $\bma_i$ such that $\bmx^0$ and $\bmdelta$ both lie on the same side of the plane through the origin orthogonal to the measurement vector $\bma_i.$

\section{Preliminaries:  classical sphere covering problems  and geometric probability}
\label{sec:preliminaries}

To derive sharp conditions on the success probability of PhaseMax, we require a set of  tools from geometric probability.
Many classical problems in geometric probability involve calculating the likelihood of a sphere being covered by random ``caps,'' or semi-spheres, which we define below.

\begin{definition} \label{def:caps}
Consider the set $\setS^{n-1}_{\setH}=\{\bmx\in \setH^n\,|\, \|\bmx\|_2=1\},$ the unit sphere embedded in $\setH^n.$  Given a vector $\bma \in \setH^n,$ the cap centered at $\bma$ with central angle $\theta$ is defined as
 \aln{ \label{smallcap}
  \setC_\setH(\bma,\theta) = \{\bmdelta \in \setS^{n-1}_\setH | \, \la \bma,\bmdelta\ra_\Re > \cos(\theta)\}.
  }
  This cap contains all vectors that form an angle with $\bma$ of less than $\theta$ radians.  When $\theta=\pi/2,$ we have a semisphere centered at $\bma,$ which is simply denoted by
  \aln{ \label{cap}
  \setC_\setH(\bma)= \setC_\setH(\bma,\pi/2) = \{\bmdelta \in \setS^{n-1}_\setH | \, \la \bma,\bmdelta\ra_\Re > 0\}.
  }

\end{definition}

We say that a collection of caps {\em covers} the entire sphere if the sphere is contained in the union of the caps.  Before we can say anything useful about when a collection of caps covers the sphere, we will need the following classical result, which is often attributed to Schl\"afli~\cite{schlaefli1953}. 
Proofs that use simple induction methods can be found in \cite{wendel1962problem,gilbert1965probability,furedi1986random}.  For completeness, we briefly include a short proof in Appendix \ref{sec:cutproof}.
\begin{lemma} \label{realcut}
Consider a sphere $\rsphere{n}\subset \reals^n.$  Suppose we slice the sphere with $k$ planes through the origin.  Suppose also that every subset of $n$ planes have linearly independent normal vectors.  These planes divide the sphere into 
  \aln{ \label{numslices}
  r(n,k)=2\sum_{i=0}^{n-1}{ k-1 \choose i}
  }
regions.
\end{lemma}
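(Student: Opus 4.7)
The plan is to establish the recurrence $r(n,k) = r(n,k-1) + r(n-1,k-1)$ by induction on $k$ (with the induction also descending in $n$), and then to check that the claimed closed form satisfies this same recurrence together with the appropriate base cases.

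For the base cases, when $k=1$ a single hyperplane through the origin cuts $\rsphere{n}$ into two hemispheres, matching $2\binom{0}{0}=2$; and when $n=1$, the sphere $\rsphere{1}=\{\pm 1\}$ consists of two antipodal points that no subspace through the origin separates further, matching $2\binom{k-1}{0}=2$. For the inductive step I would isolate the $k$-th hyperplane $P_k$ and invoke the inductive hypothesis on the remaining $k-1$ hyperplanes, which produce $r(n,k-1)$ regions on $\rsphere{n}$. Reintroducing $P_k$ intersects $\rsphere{n}$ along a great sphere naturally identified with $\rsphere{n-1}\subset P_k$; the remaining $k-1$ hyperplanes cut $P_k$ in $(n-2)$-dimensional subspaces through its origin, which by the inductive hypothesis (now in dimension $n-1$) subdivide the great sphere into $r(n-1,k-1)$ pieces. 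Each such piece lies on the common boundary of exactly two regions of $\rsphere{n}$ that formed a single region before $P_k$ was reintroduced, so $P_k$ bisects precisely $r(n-1,k-1)$ of the preexisting regions, giving the recurrence.

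To close the argument, I would verify that $R(n,k):=2\sum_{i=0}^{n-1}\binom{k-1}{i}$ obeys the same recurrence. Using Pascal's rule $\binom{k-1}{i}=\binom{k-2}{i}+\binom{k-2}{i-1}$ and shifting the index of the second sum,
\[ R(n,k) = 2\sum_{i=0}^{n-1}\binom{k-2}{i} + 2\sum_{i=0}^{n-2}\binom{k-2}{i} = R(n,k-1) + R(n-1,k-1), \]
which together with the matching base cases completes the induction.

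The main obstacle will be justifying that when attention is restricted to $P_k$, the induced $(n-2)$-dimensional traces of the other $k-1$ hyperplanes satisfy the general-position hypothesis in dimension $n-1$, so that the inductive hypothesis can be applied on the great sphere. This will follow from the observation that any linear dependence among $n-1$ of the induced normals inside $P_k$, combined with the normal to $P_k$ itself, would produce $n$ linearly dependent normals in $\reals^n$, contradicting the original general-position assumption. Once this verification is in hand, the remaining work is only the routine combinatorial identity displayed above.
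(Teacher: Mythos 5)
Your proposal is correct and follows essentially the same route as the paper's own proof in Appendix A: both derive the recurrence $r(n,k)=r(n,k-1)+r(n-1,k-1)$ by adding the $k$-th hyperplane and counting the regions it bisects via the induced arrangement on the great sphere $P_k\cap\rsphere{n}$. You actually supply two details the paper leaves to the reader --- the Pascal's-rule verification of the closed form and the argument that general position is inherited by the projected normals --- and your base cases ($k=1$ and $n=1$) are an equivalent substitute for the paper's ($k=1$ and $n=2$).
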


Classical results in geometric probability study the likelihood of a sphere being covered by random caps with centers chosen  independently and uniformly from the sphere's surface.  For our purposes, we need to study the more specific case in which caps are only chosen from a subset of the sphere.  While calculating this probability is hard in general, it is quite simple when the set obeys the following symmetry condition.

\begin{definition} \label{symmetric}
We say that the set $\setA$ is {\em symmetric} if, for all $\bmx \in \setA,$ we also have $-\bmx \in \setA.$  
\end{definition}

We are now ready to prove a general result that states when the sphere is covered by random caps. 

\begin{lemma} \label{lem:coins}
Consider some symmetric set $\setA \subset \setS^{n-1}_{\reals}$ of positive ($n-1$ dimensional) measure.  Choose some set of $m_\setA$ measurements~$\{\bma_i\}_{i=1}^{m_\setA}$ uniformly from $\setA.$  Then, the caps $\{\setC_\Real(\bma_i)\}$ cover the sphere $\rsphere{n}$ with probability 
 $$ p_\text{cover}(m_\setA,n) = 1-\frac{1}{2^{m_\setA-1}}   \sum_{k=0}^{n-1} {{m_\setA-1}\choose{k}}.$$
 This is the probability of turning up $n$ or more heads when flipping $m_\setA-1$ fair coins.
\end{lemma}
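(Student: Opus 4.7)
The strategy is the classical symmetrization argument of Wendel, combined with the hyperplane-counting result in Lemma~\ref{realcut}. First I would translate the covering event into a statement about the measurement vectors $\bma_i$ themselves: the open caps $\{\setC_\Real(\bma_i)\}$ fail to cover $\rsphere{n}$ iff there exists $\bmx\in\rsphere{n}$ with $\la \bma_i,\bmx\ra_\Re \le 0$ for every $i$, i.e., iff all of the $\bma_i$ lie in a common closed semisphere centered at $-\bmx$. Up to a measure-zero boundary event, this coincides with all $\bma_i$ lying in a common open semisphere, so the task reduces to computing the probability of this ``common semisphere'' event.

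Next I would exploit the symmetry of $\setA$ via a random-sign decomposition. Because $\setA$ is symmetric (Definition~\ref{symmetric}), a vector $\bma_i$ drawn uniformly from $\setA$ has the same distribution as $\epsilon_i \bmb_i$, where $\epsilon_i$ is Rademacher (uniform on $\{-1,+1\}$) and $\bmb_i$ is drawn uniformly from $\setA$, with $\{\epsilon_i\}$ and $\{\bmb_i\}$ independent across $i$ and of each other. The common-semisphere event then becomes: there exists $\bmx\in\rsphere{n}$ such that $\mathrm{sign}\la \bmb_i,\bmx\ra_\Re = \epsilon_i$ for all $i$. Equivalently, the random sign pattern $(\epsilon_1,\dots,\epsilon_{m_\setA})$ must coincide with the sign pattern of some open region cut out of $\rsphere{n}$ by the $m_\setA$ hyperplanes orthogonal to the $\bmb_i$.

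I would then condition on $\{\bmb_i\}$. Since $\setA$ has positive $(n-1)$-dimensional measure, any $n$ vectors drawn from it are linearly independent almost surely, so Lemma~\ref{realcut} applies and says the hyperplanes partition $\rsphere{n}$ into exactly $r(n,m_\setA)=2\sum_{k=0}^{n-1}\binom{m_\setA-1}{k}$ open regions. Each such region is characterized by a distinct sign pattern in $\{-1,+1\}^{m_\setA}$. Conditional on $\{\bmb_i\}$ (in general position), the probability that the independent Rademacher vector $(\epsilon_i)$ lands on one of these $r(n,m_\setA)$ realized sign patterns is therefore $r(n,m_\setA)/2^{m_\setA}$, which does not depend on $\{\bmb_i\}$. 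Taking the expectation yields the same unconditional probability of failure, and subtracting from one gives the announced formula. The ``coin flip'' reformulation at the end is immediate: flipping $m_\setA-1$ fair coins turns up fewer than $n$ heads with probability $2^{-(m_\setA-1)}\sum_{k=0}^{n-1}\binom{m_\setA-1}{k}$.

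The main obstacle is to set up the random-sign decomposition cleanly and to justify the almost-sure general-position hypothesis required to invoke Lemma~\ref{realcut}; once both are in place, the combinatorial count and its division by $2^{m_\setA}$ are straightforward. The boundary cases where some $\la \bmb_i,\bmx\ra_\Re$ vanishes or where fewer than $n$ of the $\bmb_i$ are linearly independent contribute zero probability and can be dispatched with a short measure-theoretic remark.
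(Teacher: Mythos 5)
Your proposal is correct and follows essentially the same route as the paper: the random-sign decomposition $\bma_i = \epsilon_i\bmb_i$ is exactly the paper's two-step construction $\bma_i = c_i\bma_i'$, and the count of realizable sign patterns via Lemma~\ref{realcut} divided by $2^{m_\setA}$ is the same combinatorial core. Your version is slightly more explicit about the measure-zero boundary cases and about conditioning on $\{\bmb_i\}$ before averaging over the signs, but these are presentational refinements rather than a different argument.
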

\begin{proof}
Consider the following two-step process for constructing the set $\{\bma_i\}.$  First, we sample $m_\setA$ vectors~$\{\bma'_i\}$ independently and uniformly from $\setA.$ Note that, because $\setA$ has positive measure, it holds with probability 1 that any subset of $n$ vectors will be linearly independent.

Second, we define $\bma_i = c_i \bma'_i,$ where $\{c_i\}$ are i.i.d. Bernoulli variables that take value $+1$ or $-1$ with probability $\nicefrac{1}{2}.$ 
 We can think of this second step as randomly ``flipping''  a subset of uniform random vectors.  Since $\setA$ is symmetric and $\{\bma'_i\}$ is sampled independently and uniformly, the random vectors $\{\bma_i\}$ also have an independent and uniform distribution over~$\setA.$ This construction may seem superfluous since both~$\{\bma_i\}$ and~$\{\bma'_i\}$ have the same distribution, but we will see below that this becomes useful.

 Given a particular set of coin flips $\{c_i\},$ we can write the set of points that are {\em not} covered by the caps~$\{\setC_\Real(\bma_i)\}$ as 
   \aln{ \label{capintersect}
   \bigcap_i \setC_\Real(-\bma_i) = \bigcap_i  \setC_\Real(-c_i\bma'_i).
   }
 Note that there are $2^{m_\setA}$ such intersections that can be formed, one for each choice of the sequence $\{c_i\}$.  The caps $\{\setC_\Real(\bma_i)\}$ cover the sphere whenever the intersection \eqref{capintersect} is empty.  
 Consider the set of planes $\big\{\{\bmx \mid  \la \bma_i, \bmx\ra =0\}\big\}.$ From Lemma~\ref{realcut}, we know that $m_\setA$ planes with a common intersection point divide the sphere into  
 $ r(n,m_\setA)$
 non-empty regions.    Each of these regions corresponds to the intersection \eqref{capintersect} for one possible choice of $\{c_i\}.$
Therefore, of the $2^{m_\setA}$ possible intersections, only $r(n,m_\setA)$ of them are non-empty.  Since the sequence $\{c_i\}$ is random, each intersection is equally likely to be chosen, and so the probability of covering the sphere is   
$$p_\text{cover}(m_\setA,n) = 1-\frac{r(n,m_\setA)}{2^{m_\setA}}.$$
\end{proof}

\begin{remark}
Several papers have studied the probability of covering the sphere using points independently and uniformly chosen over the entire sphere. The only aspect that is unusual about Lemma \ref{lem:coins} is the observation that this probability remains the same if we restrict our choices to the set $\setA,$ provided $\setA$ is symmetric.  We note that this result was observed by Gilbert~\cite{gilbert1965probability} in the case $n=3,$ and we generalize it to any $n>1$ using a similar argument.
\end{remark}

We now present a somewhat more complicated covering theorem.  The next result considers the case where the measurement vectors are drawn only from a semisphere.  We consider the question of whether these vectors cover enough area to contain not only their home semisphere, but another nearby semisphere as well.

\begin{figure}
\centering
\includegraphics[width=0.97\columnwidth]{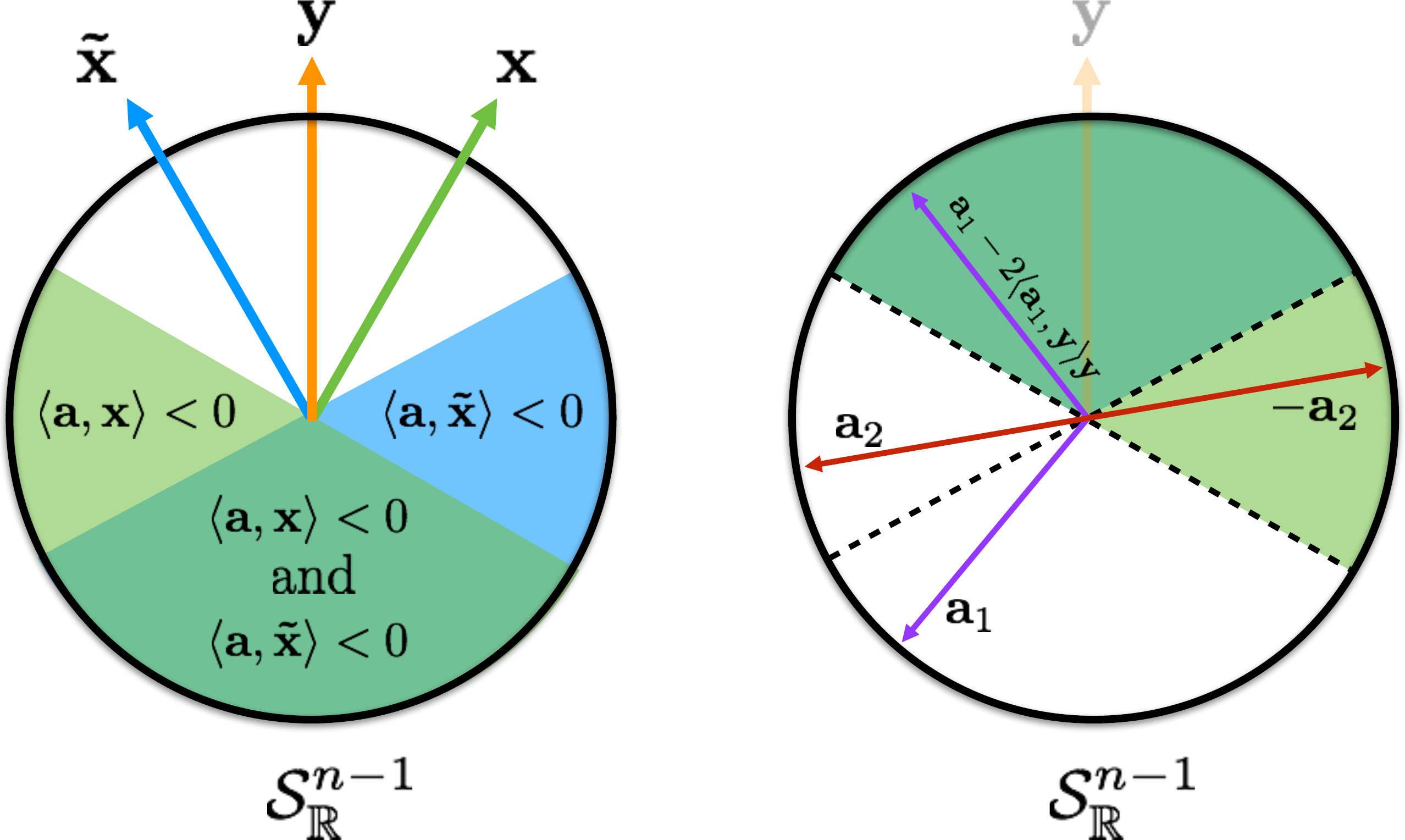}
\caption{(left) A diagram showing the construction of $\bmx,$ $\bmy,$ and $\tilde\bmx$ in the proof of Lemma \ref{lem:neighbor}. (right) The reflections defined in (\ref{case1}-\ref{case3}) map vectors $\bma_i$ lying in the half-sphere  $\bma_i^T \bmx <0$ onto the half-sphere $\bma^T \bmx >0.$}
\label{fig:reflect}
\end{figure}

\begin{lemma} \label{lem:neighbor}
Consider two vectors $\bmx,\bmy \subset \setS^{n-1}_{\reals},$ and the caps $\setC_\Real(\bmx)$ and $\setC_\Real(\bmy).$ Let $\alpha= 1-\frac{2}{\pi}\ang(\bmx,\bmy)$ be a measure of the similarity between the vectors $\bmx$ and $\bmy.$ Draw some collection $\{\bma_i \in \setC_\Real(\bmx)\}_{i=1}^m$ of~$m$ vectors uniformly from $\setC_\Real(\bmx)$ so that $m>2n/\alpha .$
  Then 
   $$\setC_\Real(\bmy) \subset \bigcup_i \setC_\Real(\bma_i)$$  
with probability at least
$$p_{\text{cover}}(m,n;\bmx,\bmy)  \ge 
1 - \exp \!\left (-  \frac{ (\alpha m-2n )^2 }{2m} \right)\!.$$
\end{lemma}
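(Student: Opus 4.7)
Plan: I would reduce Lemma \ref{lem:neighbor} to Lemma \ref{lem:coins} via a case-based reflection construction, combined with Hoeffding's inequality. First, introduce $\tilde\bmx$, the reflection of $\bmx$ across the line spanned by $\bmy$ (so $\bmy$ bisects the angle between $\bmx$ and $\tilde\bmx$). Two basic observations drive the argument: (i) $\setC_\Real(\bmy) \subseteq \setC_\Real(\bmx) \cup \setC_\Real(\tilde\bmx)$, since for any $\bmz \in \setC_\Real(\bmy)$ the sum $\bmx^T\bmz + \tilde\bmx^T\bmz$ is a positive multiple of $\bmy^T\bmz > 0$, forcing at least one term to be positive; and (ii) for $\bma$ uniform on $\setC_\Real(\bmx)$, the probability $P(\bma \in \setC_\Real(\tilde\bmx)) = \alpha$, because the wedge $\setC_\Real(\bmx) \cap \setC_\Real(\tilde\bmx)$ has angular aperture $\pi - 2\ang(\bmx,\bmy)$ inside the hemisphere $\setC_\Real(\bmx)$ of aperture $\pi$.

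Following the three cases suggested by the figure, I would partition $\setC_\Real(-\bmx)$ into sub-regions determined by the signs of $\bma\cdot\bmy$ and $\bma\cdot\tilde\bmx$, and on each sub-region define a specific reflection (e.g., across $\bmx^\perp$, across the line spanned by $\bmy$, or the antipodal map) that folds the sub-region into $\setC_\Real(\bmx)$. Together with the identity on $\setC_\Real(\bmx)$, these define a measure-preserving map $f: S^{n-1}_\Real \to \setC_\Real(\bmx)$ whose pushforward of uniform sphere measure is uniform on $\setC_\Real(\bmx)$. The reflections must be chosen so that for every $\bmz \in \setC_\Real(\bmy)$, the implication $\bmb^T\bmz>0 \Rightarrow f(\bmb)^T\bmz>0$ holds; equivalently, the cap $\setC_\Real(f(\bmb))$ still covers every point of $\setC_\Real(\bmy)$ that was already covered by $\setC_\Real(\bmb)$. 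The fact that $\bmy$ lies in the reflecting hyperplanes used in each case is what makes this preservation possible.

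Given such a map, covering $\setC_\Real(\bmy)$ by the $\bma_i = f(\bmb_i)$ is implied by sphere-covering by auxiliary sphere-uniform $\bmb_i$'s. I would then apply Lemma \ref{lem:coins} to the $\bmb_i$'s on the symmetric set to obtain a sphere-covering bound $1 - \sum_{k=0}^{m_\setA - 1}\binom{m_\setA-1}{k}/2^{m_\setA-1}$ restricted to $k < n$, where $m_\setA$ is a binomial count of "useful" $\bmb_i$'s (those whose folded image lies in the doubly useful wedge) with mean $\alpha m$. A standard Hoeffding estimate on the binomial tail, together with concentration of $m_\setA$ around $\alpha m$, then combines to give the claimed $\exp(-(\alpha m - 2n)^2/(2m))$. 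The hardest step will be defining the three reflections precisely so that they simultaneously (i) preserve uniformity of the pushforward on $\setC_\Real(\bmx)$ and (ii) preserve the covering relation for every $\bmz \in \setC_\Real(\bmy)$; verifying both properties is the main geometric case analysis of the proof, with the position of each $\bmb$ relative to $\bmx$, $\bmy$, and $\tilde\bmx$ dictating which reflection is applied.
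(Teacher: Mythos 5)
Your proposal follows essentially the same route as the paper's proof: reflect $\bmx$ over $\bmy$ to obtain $\tilde\bmx$, fold the sphere onto $\setC_\Real(\bmx)$ by a piecewise isometry with exactly the three cases you describe, apply Lemma~\ref{lem:coins} to the points landing in the symmetric ``hourglass'' set determined by $\bmx$ and $\tilde\bmx$, and finish with a Hoeffding bound on a heads count whose success probability per measurement is governed by $\alpha$. One caution: the preservation requirement $\bmb^T\bmz>0\Rightarrow f(\bmb)^T\bmz>0$ cannot hold for every $\bmb$ on the sphere (it necessarily fails on the region where $f$ is the antipodal map), and it only needs to hold --- and only does hold --- for $\bmb$ in the hourglass set; your later restriction to the ``useful'' points with mean count $\alpha m$ already reflects this, so the argument goes through as in the paper.
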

\begin{proof}
Due to rotational symmetry,  we assume $\bmy=[1,0,\ldots,0]^T$ without loss of generality.  
Consider the point $\tilde \bmx = [x_1,-x_2, \ldots,-x_n]^T.$   This is the reflection of $\bmx$ over $\bmy$ (see Figure \ref{fig:reflect}).   
Suppose we have some collection~$\{\bma_i\}$ independently and uniformly distributed on the entire sphere.  Consider the collection of vectors 
 \begin{numcases} {\bma'_i =}
  \bma_i ,  & if  $\la \bma_i,\bmx\ra   \ge 0$ \label{case1}\\
  \bma_i - 2\la \bma_i, \bmy\ra \bmy  &  if  $\la \bma_i,\bmx\ra  < 0,   \la\bma_i, \tilde \bmx\ra  < 0$ \label{case2}\\
   -\bma_i & if  $\la \bma_i,\bmx\ra < 0,   \la \bma_i, \tilde \bmx\ra  \ge 0.$ \label{case3}
\end{numcases}
The mapping $\bma_i \to \bma'_i$ maps the lower half sphere $\{\bma\, |\,  \la \bma,\bmx\ra  <0\}$ onto the upper half sphere $\{\bma \,|\, \la\bma,\bmx\ra >0\}$ using a combination of reflections and translations (depicted in Figure~\ref{fig:reflect}).   Indeed, for all $i$ we have  $\la\bma'_i,\bmx\ra \ge 0.$  This is clearly true in case \eqref{case1} and  \eqref{case3}.  In case \eqref{case2}, we use the definition of $\bmy$ and $\tilde \bmx$ to write 
\begin{align*}
 \la\bma'_i,\bmx\ra & =  \la\bma_i,\bmx\ra - 2\la\bma_i, \bmy\ra \la \bmy,\bmx\ra \\
& = \la\bma_i,\bmx\ra - 2[\bma_i]_1x_1 =- \la\bma_i, \tilde \bmx\ra  \ge 0.
\end{align*}
Because the mapping $\bma_i \to \bma'_i$  is onto and (piecewise) isometric, $\{\bma'_i\}$ will be uniformly distributed over the half sphere $\{\bma\, |\,  \la\bma,\bmx^0\ra  >0\}$ whenever $\{\bma_i\}$ are independently and uniformly distributed over the entire sphere.

Consider the ``hourglass''  shaped, symmetric set
$$\setA = \{\bma\,|\, \la\bma, \bmx\ra  > 0, \la\bma, \tilde \bmx\ra  > 0\} \cup \{\bma\, |\,  \la\bma, \bmx\ra  < 0, \la\bma, \tilde \bmx\ra < 0\} .$$
We now make the following claim:  $\setC_\Real(\bmy) \subset \bigcup_i \setC_\Real(\bma'_i)$ whenever
\aln{ \label{fullcover}
\rsphere{n} \subset \bigcup_{\bma_i\in \setA} \setC_\Real(\bma_i).
}
In words, if the caps defined by the subset of $\{\bma_i\}$ in $\setA$ cover the entire sphere, then the caps $\{\setC_\Real(\bma'_i)\}$ (which have centers in $\setC_\Real(\bmx)$) not only cover $\setC_\Real(\bmx),$ but also cover its neighbor cap $\setC_\Real(\bmy).$   To justify this claim, suppose that \eqref{fullcover} holds.   Choose some $\bmdelta \in \setC_\Real(\bmy).$  This point is covered by some cap $\setC_\Real(\bma_i)$ with $\bma_i \in \setA.$  If $\la\bma_i,\bmx\ra > 0$ and $\la\bma_i,\tilde\bmx\ra > 0,$ then $\bma_i=\bma'_i$ and $\bmdelta$ is covered by $\setC_\Real(\bma'_i).$  Otherwise, we have $\la \bma_i,\bmx\ra <0$ and $\la\bma_i,\tilde \bmx\ra < 0,$ then
\begin{align*}
\la \bmdelta, \bma'_i \ra  & = \la\bmdelta, \bma_i - 2\la\bma_i,\bmy\ra\bmy\ra \\
& = \la\bmdelta,\bma_i\ra - 2\la \bma_i,\bmy\ra\la\bmdelta, \bmy\ra \ge  \la\bmdelta,\bma_i\ra \ge 0.
\end{align*}
Note we have used the fact that $\la\bmdelta, \bmy\ra$ is real and non-negative because $\bmdelta \in \setC_\Real(y).$ We have also used  $\la\bma_i, \bmy\ra  = [\bma_i]_1 = \half(\la\bma_i, \bmx\ra+\la\bma_i, \tilde \bmx\ra)  <0,$ which follows from the definition of $ \tilde \bmx$ and the definition of~$\setA.$   In either case, $\la\bmdelta,\bma'_i\ra > 0 ,$ and we have $\bmdelta \in \setC_\Real(\bma'_i),$ which proves our claim.  

We can now see that the probability that $\setC_\Real(\bmy) \subset \bigcup_i \setC_\Real(\bma'_i)$ is at least as high as the probability that~\eqref{fullcover} holds.  Let $p_\text{cover}(m,n;\bmx,\bmy \mid m_\setA)$ denote the probability of covering $\setC(\bmy)$ conditioned on the number $m_\setA$ of points lying in $\setA.$  From \lemref{lem:coins}, we know that  
$$p_\text{cover}(m,n;\bmx,\bmy \mid m_\setA)\ge p_\text{cover}(m_\setA,n) > p_\text{cover}(m_\setA+1,n+1).$$
As noted in Lemma \ref{lem:coins}, the expression on the right is the chance of turning up more than $n$ heads when flipping $m_\setA$ fair coins.   

The probability $p_{\text{cover}}(m,n;\bmx,\bmy)$ is then given by 
\begin{align*}
p_{\text{cover}}(m,n;\bmx,\bmy) & = \mathbb{E}_{m_\setA}[p_\text{cover}(m,n;\bmx,\bmy \mid m_\setA)] \\
& \ge \mathbb{E}_{m_\setA}[p_\text{cover}(m_\setA,n)] \\
& >  \mathbb{E}_{m_\setA}[p_\text{cover}(m_\setA+1,n+1)]
.
\end{align*}
The expression on the right hand side is the probability of getting more than $n$ heads when one fair coin is flipped for every measurement $\bma_i$ that lies in $\setA$. 

Let's evaluate how often this coin-flipping event occurs.  The region $\setA$ is defined by two planes that intersect at an angle of $\beta=\ang(\bmx,\tilde \bmx) = 2\ang(\bmx,\bmy).$ The probability of a random point $\bma_i$ lying in $\setA$ is given by $\alpha=\frac{2\pi - 2\beta}{2\pi} = 1-\frac{\beta}{\pi},$ which is the fraction of the unit sphere that lies either above or below both planes.    The probability of a measurement $\bma_i$ contributing to the heads count is half the probability of it lying in $\setA,$ or $\half \alpha.$  The probability of turning up more than $n$ heads is therefore given by
$$1- \sum_{k=0}^{n} \left (\half \alpha \right)^k  \left(1-\half \alpha \right)^{m-k} {{m}\choose{k}} .$$
Using Hoeffding's inequality, we obtain the following lower bound
\begin{align*}
p_{\text{cover}}(m,n;\bmx,\bmy) & \ge 1- \sum_{k=0}^{n} \left (\half \alpha \right)^k  \left(1-\half \alpha \right)^{m-k} {{m}\choose{k}} \\
&  \ge 1 - \exp \!\left( \frac{ - (\alpha m-2n )^2 }{2m} \right )\!,
\end{align*}
which is only valid for $\alpha m>2n.$
\end{proof}

\begin{remark}
In the proof of Lemma \ref{lem:neighbor}, we obtained a bound on $ \mathbb{E}_{m_\setA}[p_\text{cover}(m_\setA+1,n+1)]$ using an intuitive argument about coin flipping probabilities. This expectation could have been obtained more rigorously (but with considerably more pain) using the method of probability generating functions.  
\end{remark}

Lemma \ref{lem:neighbor} contains most of the machinery needed for the proofs that follow.  In the sequel, we prove a number of exact reconstruction theorems for (PM).  Most of the results rely on short arguments followed by the invocation of Lemma \ref{lem:neighbor}. 

We finally state a result that bounds from below the probability of covering the sphere with caps of small central angle. The following Lemma is a direct corollary of the results of Burgisser, Cucker, and Lotz in~\cite{burgisser2010coverage}. A derivation that uses their results is given in Appendix \ref{sec:smallcap}.

\begin{lemma} \label{smallcapscover} Let $n\ge 9,$ and $m>2n.$ Then the 
probability of covering the sphere $S^{n-1}_\reals$ with independent uniformly sampled caps of central angle $\phi \le \pi/2$ is lower bounded by
\begin{align*} 
& p_\text{cover}(m,n,\phi) \ge \\
 &\qquad  1-   \frac{(em)^n\sqrt{n-1}}{(2n)^{n-1}}  \exp\!\left( \!-\frac{\sin^{n-1}(\phi)(m-n)}{\sqrt{8n}}\right ) \!\cos(\phi)
\\
 & \qquad - \exp\!\left(\! - \frac{(m-2n+1)^2}{2m-2}  \right)\!.
\end{align*}
\end{lemma}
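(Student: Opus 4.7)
The plan is to derive the bound as a direct corollary of the main coverage-probability theorem of Bürgisser, Cucker, and Lotz~\cite{burgisser2010coverage}, which gives an explicit non-asymptotic upper bound on the probability that $m$ i.i.d.\ uniform spherical caps of half-angle $\phi \le \pi/2$ fail to cover the sphere $S^{n-1}_\Real$. Their bound has the shape of a product of a combinatorial prefactor (involving a binomial coefficient of the form $\binom{m}{n-1}$), an angular factor (involving powers of $\sin(\phi)$ and $\cos(\phi)$), and an exponential factor that decays with $m$ times the cap measure. It is those three pieces that, after simplification, yield the first term of the stated lower bound.

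First I would restate the BCL theorem in our notation, identifying their cap-radius and sample-size parameters with our $\phi$ and $m$, and verifying that the hypotheses $n \ge 9$ and $m > 2n$ place us in the regime where their bound is non-vacuous and the auxiliary inequalities I will use remain valid. Next, I would apply Stirling-type estimates to the combinatorial prefactor $\binom{m}{n-1}$, bounding it above by an expression of the form $(em/(n-1))^{n-1}$ and then converting the denominator into $(2n)^{n-1}/\sqrt{n-1}$ using elementary inequalities that hold once $n \ge 9$. Collecting the remaining angular and exponential factors from BCL yields the first error term
\[
\frac{(em)^n\sqrt{n-1}}{(2n)^{n-1}} \exp\!\left(-\frac{\sin^{n-1}(\phi)(m-n)}{\sqrt{8n}}\right)\cos(\phi)
\]
of the claim.

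The second exponential term $\exp\!\left(-(m-2n+1)^2/(2m-2)\right)$ does not arise from the BCL theorem itself; it comes from a preliminary conditioning step. In order for the BCL bound to be applicable in the form above, a minimum number of caps (of order $n$) must land in a ``useful'' symmetric region of the sphere. To guarantee this, I would condition on the random count of samples falling in that region, bound the conditional failure probability by the BCL estimate above, and then control the tail of this binomial count using Hoeffding's inequality, exactly as in the proof of \lemref{lem:neighbor}. The Hoeffding tail with the threshold chosen to be $2n-1$ gives precisely the claimed exponential.

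The main obstacle is arithmetic bookkeeping: one must choose the conditioning threshold and apply Stirling just tightly enough so that both the prefactor $(em)^n\sqrt{n-1}/(2n)^{n-1}$ and the Hoeffding tail $\exp(-(m-2n+1)^2/(2m-2))$ emerge with exactly the stated constants, and so that the assumption $n \ge 9$ is the minimal regime in which all the inequalities used are simultaneously valid. The complete calculation is deferred to Appendix~\ref{sec:smallcap}.
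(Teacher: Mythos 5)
Your high-level plan---derive the lemma as a corollary of the B\"urgisser--Cucker--Lotz coverage bound---is exactly the paper's route, but two of your structural claims about where the pieces come from are wrong, and the real technical work is missing.

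First, the second exponential term $\exp\bigl(-(m-2n+1)^2/(2m-2)\bigr)$ does \emph{not} come from a preliminary conditioning step on the number of caps landing in a symmetric subregion. In this lemma the caps are sampled uniformly from the whole sphere, so there is no restricted region and no binomial count to condition on; that mechanism belongs to the proof of \lemref{lem:neighbor}, not here. The term $\frac{1}{2^{m-1}}\sum_{k=0}^{n-1}\binom{m-1}{k}$ is already present \emph{inside} the BCL theorem (it is the Schl\"afli/Wendel term), and the only step needed is to recognize it as $\Pr{\mathrm{Bin}(m-1,\tfrac12)\le n-1}$ and apply Hoeffding with deviation $(m-2n+1)/2$, which requires $m>2n$. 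Your proposed conditioning with ``threshold $2n-1$'' would produce a different exponent and, more importantly, answers a question that does not arise.

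Second, you describe the BCL bound as a product of a combinatorial prefactor, an angular factor, and an exponential factor, and then say the first error term follows by ``collecting'' these. In fact the BCL bound contains an \emph{integral}, $\binom{m}{n}C\int_0^{\cos\phi}(1-t^2)^{(n^2-2n-1)/2}(1-\lambda(t))^{m-n}\,\mathrm{d}t$, where $\lambda(t)$ is itself an integral (the normalized cap measure). The substance of the proof is: (i) a lower bound $\lambda(t)\ge\frac{1}{\sqrt{8n}}(1-t^2)^{(n-1)/2}$ via a Wallis-ratio estimate on $\Gamma(n/2)/\Gamma((n-1)/2)$; (ii) converting both factors of the integrand to exponentials using $(1-x)^a\le e^{-ax}$; and (iii) a Cauchy--Schwarz step that splits the integral and evaluates each piece crudely by the length of the interval, which is where the stray factor $\cos(\phi)$ comes from. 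None of this appears in your outline, and without it the factors $\exp\bigl(-\sin^{n-1}(\phi)(m-n)/\sqrt{8n}\bigr)$ and $\cos(\phi)$ cannot be obtained. Also note the paper bounds $\binom{m}{n}\le(em)^n/n^n$ (not $\binom{m}{n-1}$), and the $\sqrt{n-1}/2^{n-1}$ in the prefactor comes from the explicit constant $C=n\sqrt{n-1}/2^{n-1}$ in BCL, not from a Stirling estimate of the binomial coefficient.
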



\section{Recovery Guarantees}
 \label{sec:exact}
 
Using the uniqueness condition provided by \thmref{thm:opt} and the tools derived in \secref{sec:preliminaries}, we now develop sharp lower bounds on the success probability of PhaseMax for noiseless real- and complex-valued systems.  The noisy case will be discussed in \secref{sec:noise}.
 
\subsection{The Real Case} 

We now study problem (PM) in the case that the unknown signal and measurement vectors are real valued.
Consider some collection of measurement vectors $\{\bma_i\}$ drawn independently and uniformly from~$\rsphere{n}.$  For simplicity, we also consider the collection $\{\tilde \bma_i\}=\{\phase(\la \bma_i,\bmx^0\ra) \bma_i\}$ of aligned vectors that satisfy $\la\tilde \bma_i, \bmx^0\ra \ge 0$ for all $i$.  Using this notation, \thmref{thm:opt} can be rephrased as a simple geometric condition.

\begin{corollary} \label{cor:simplecond}
Consider the set $\{\tilde \bma_i\}=\{\phase(\la\bma_i,\bmx^0\ra)\bma_i\}$ of aligned measurement vectors.   Define the half sphere of aligned ascent directions  
$$\setD_\Real = \setC_\Real(\hat \bmx) = \{\bmdelta \in \setS^{n-1}_\Real|\,  \la \bmdelta,\hat \bmx\ra \in \reals \ge 0   \}.$$
 The true vector $\bmx^0$ will be the unique maximizer of (PM) if
$$\setD_\Real \subset \bigcup_i \setC_\Real(\tilde \bma_i).$$
\end{corollary}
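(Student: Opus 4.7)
My plan is to derive the corollary as a direct translation of the uniqueness criterion in \thmref{thm:opt} into the half-sphere covering language of \defn{\ref{def:caps}}. The whole proof will be a few lines once the dictionary between ``ascent directions'' and ``caps'' is set up, so the strategy is to work carefully through the specializations and to handle one measure-zero degenerate case.

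The first step will be to specialize \thmref{thm:opt} to the real case. When $\setH=\Real,$ the Hermitian conjugate is trivial and the real-part operator is vacuous, so the uniqueness hypothesis reduces to: for every unit vector $\bmdelta$ aligned with $\hat\bmx,$ there exists an index $i$ with $\la\bma_i,\bmx^0\ra\la\bma_i,\bmdelta\ra>0.$ In the real case, alignment with $\hat\bmx$ just means $\la\bmdelta,\hat\bmx\ra\ge 0,$ so the unit vectors $\bmdelta$ to be tested are exactly those in the closed half-sphere $\setD_\Real=\setC_\Real(\hat\bmx)$ (modulo its measure-zero equator).

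The second step will be to rewrite the condition in terms of $\tilde\bma_i=\phase(\la\bma_i,\bmx^0\ra)\bma_i.$ Since in the real case $\phase(\cdot)$ is the sign, we obtain $\la\tilde\bma_i,\bmx^0\ra=|\la\bma_i,\bmx^0\ra|\ge 0$ and the factorization
\[
\la\bma_i,\bmx^0\ra\,\la\bma_i,\bmdelta\ra \;=\; |\la\bma_i,\bmx^0\ra|\,\la\tilde\bma_i,\bmdelta\ra.
\]
Away from the (generically non-occurring) case $\la\bma_i,\bmx^0\ra=0,$ this product is strictly positive iff $\la\tilde\bma_i,\bmdelta\ra>0,$ i.e., iff $\bmdelta\in\setC_\Real(\tilde\bma_i).$ Degenerate indices with $\la\bma_i,\bmx^0\ra=0$ simply contribute nothing to the covering side, so the equivalence with \thmref{thm:opt} is preserved.

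The last step will be to invoke the covering hypothesis. If $\setD_\Real\subset\bigcup_i\setC_\Real(\tilde\bma_i),$ then every $\bmdelta\in\setD_\Real$ belongs to at least one cap $\setC_\Real(\tilde\bma_i),$ which furnishes an index $i$ satisfying the hypothesis of \thmref{thm:opt}; uniqueness of $\bmx^0$ then follows. The only ``obstacle'' I anticipate is being careful about the distinction between the strict and non-strict inequalities in \defn{\ref{def:caps}} versus the closed half-sphere $\setD_\Real$; this will be handled by noting that the equator of $\setD_\Real$ has measure zero and by the standard convention that the problematic $\bmdelta$ with $\la\bmdelta,\hat\bmx\ra=0$ cannot be an ascent direction in the first place.
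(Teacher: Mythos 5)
Your proposal is correct and follows essentially the same route as the paper: specialize \thmref{thm:opt} to the real case, observe that $\la \bma_i,\bmx^0\ra\la\bma_i,\bmdelta\ra = |\la\bma_i,\bmx^0\ra|\,\la\tilde\bma_i,\bmdelta\ra$ so that membership of $\bmdelta$ in some cap $\setC_\Real(\tilde\bma_i)$ supplies the required index, and conclude from the covering hypothesis. If anything you are more careful than the paper's own two-line proof, which writes $\la\tilde\bma_i,\bmdelta\ra\ge 0$ where the strict inequality from the cap definition is what is actually needed, and which silently ignores the measure-zero degenerate indices with $\la\bma_i,\bmx^0\ra=0$ that you flag explicitly.
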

\begin{proof}
Choose some ascent direction $\bmdelta \in \setD_\Real.$  If the assumptions of this Corollary hold, then there is some~$i$ with $\bmdelta \in \setC_\Real(\tilde \bma_i),$ and so $\la\tilde \bma_i,\bmdelta\ra \ge 0.$ 
Since this is true for any $\bmdelta \in \setD_\Real,$ the conditions of \thmref{thm:opt} are satisfied and exact reconstruction holds.
\end{proof}

Using this observation, we can develop the following lower bound on the success probability of PhaseMax for real-valued systems. 

\begin{theorem} \label{thm:realrecon}
Consider the case of recovering a real-valued signal $\bmx^0\in\Real^n$ from $m$ noiseless measurements of the form \eqref{eq:original} with  measurement vectors $\bma_i$, $i=1,2,\ldots,m$, sampled independently and uniformly from the unit sphere $\rsphere{n}$. Then, the probability that PhaseMax recovers the true signal~$\bmx^0$, denoted by $p_\Real(m,n)$, is bounded from below as follows:  
$$p_\reals(m,n) \ge 1 - \exp \!\left ( \frac{ - (\alpha m -2n )^2 }{2m} \right )\!,$$
  where  $\alpha= 1-\frac{2}{\pi}\ang(\bmx^0,\hat \bmx)$ and $ m>2n/\alpha.$
\end{theorem}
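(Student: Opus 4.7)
The plan is to combine the deterministic uniqueness criterion already encoded in Corollary \ref{cor:simplecond} with the covering bound of Lemma \ref{lem:neighbor}. The strategy is essentially a one-line reduction: translate ``PhaseMax succeeds'' into a geometric covering event on the sphere, and then invoke the covering lemma with the appropriate centers. The only real content is to check that the aligned measurement vectors satisfy the distributional hypotheses of Lemma \ref{lem:neighbor}.

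First I would recall from Corollary \ref{cor:simplecond} that $\bmx^0$ is the unique maximizer of (PM) whenever
\begin{align*}
\setD_\Real \;=\; \setC_\Real(\hat\bmx) \;\subset\; \bigcup_{i=1}^{m} \setC_\Real(\tilde\bma_i),
\end{align*}
where $\tilde\bma_i = \phase(\la\bma_i,\bmx^0\ra)\,\bma_i$ are the aligned measurement vectors. So it suffices to lower bound the probability of this covering event. Next I would verify the key distributional fact: because the $\bma_i$ are drawn i.i.d.\ uniformly from $\rsphere{n}$, and because the map $\bma \mapsto \sign(\la\bma,\bmx^0\ra)\bma$ is a measure-preserving, two-to-one map from $\rsphere{n}$ onto the half sphere $\setC_\Real(\bmx^0)$, the aligned vectors $\tilde\bma_1,\ldots,\tilde\bma_m$ are i.i.d.\ uniform on $\setC_\Real(\bmx^0)$.

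With this in hand, I would apply Lemma \ref{lem:neighbor} with the substitutions $\bmx \leftarrow \bmx^0$ and $\bmy \leftarrow \hat\bmx$. The similarity parameter in that lemma, $1 - \tfrac{2}{\pi}\ang(\bmx,\bmy)$, becomes exactly $\alpha = 1 - \tfrac{2}{\pi}\ang(\bmx^0,\hat\bmx)$, and the hypothesis $m > 2n/\alpha$ matches the statement of the theorem. Lemma \ref{lem:neighbor} then gives directly
\begin{align*}
p_\Real(m,n) \;\ge\; 1 - \exp\!\left(-\frac{(\alpha m - 2n)^2}{2m}\right),
\end{align*}
which is the claimed bound.

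There is no serious obstacle: the substantive geometry is already packaged in Lemmas \ref{lem:coins} and \ref{lem:neighbor}, and the uniqueness criterion is packaged in Theorem \ref{thm:opt} and its corollary. The only subtlety worth articulating carefully is the alignment step---explaining why replacing $\bma_i$ with $\tilde\bma_i$ does not introduce any statistical dependence, so that the i.i.d.\ uniform-on-a-half-sphere hypothesis of Lemma \ref{lem:neighbor} is genuinely met. Once that is stated, the proof is essentially a single invocation.
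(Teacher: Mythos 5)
Your proposal is correct and follows the paper's own proof essentially verbatim: reduce to the covering condition of Corollary \ref{cor:simplecond}, observe that the aligned vectors $\tilde\bma_i$ are i.i.d.\ uniform on the half sphere $\setC_\Real(\bmx^0)$, and invoke Lemma \ref{lem:neighbor} with $\bmx=\bmx^0$ and $\bmy=\hat\bmx$. Your extra remark on the measure-preserving two-to-one alignment map is a welcome clarification of a step the paper states without justification.
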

\begin{proof}

Consider the set of $m$ independent and uniformly sampled measurements $\{\bma_i \in \rsphere{n}\}_{i=1}^m.$  The aligned vectors $\{\tilde \bma_i = \phase(\la\bma_i,\bmx^0\ra)\bma_i\}$ are uniformly distributed over the half sphere $\setC_\Real(\bmx^0).$ Exact reconstruction happens when the condition in \corref{cor:simplecond} holds. To obtain a lower bound on the probability of this occurrence, we can simply invoke  \lemref{lem:neighbor} with $\bmx = \bmx^0$ and $\bmy = \hat \bmx.$  
\end{proof}

\subsection{The Complex Case} 

We now prove \thmref{thm:reconC} given in \secref{sec:introduction}, which characterizes the success probability of PhaseMax for phase retrieval in complex-valued systems. For clarity, we restate our result in shorter form. 

\vspace{0.3cm}
\noindent\textbf{Theorem \ref{thm:reconC}}.
{\em Consider the case of recovering a complex-valued signal $\bmx^0\in\Complex^n$ from $m$ noiseless measurements of the form \eqref{eq:original}, with $\{\bma_i\}_{i=1}^m$ sampled independently and uniformly from the unit sphere~$\csphere{n}$. 
Then, the probability that PhaseMax recovers the true signal~$\bmx^0$ is bounded from below as follows:  
\begin{align*} 
p_\Complex(m,n) \ge 1-\exp\!\left(\! - \frac{\left(\alpha m-4n\right)^2}{2m} \right)\!,
\end{align*}
where $\alpha= 1-\frac{2}{\pi}\ang( \bmx^0,\hat\bmx)$ and $ m >4n/\alpha.$}
\vspace{0.1cm}

\begin{proof}
Consider the set $\{\tilde \bma_i\}=\{\phase(\la\bma_i,\bmx^0\ra)\bma_i\}$ of aligned measurement vectors.   Define the half sphere of aligned ascent directions 
$$\setD_\Complex  = \{\bmdelta \in \csphere{n}|\,  \la \bmdelta, \hat \bmx\ra_\Re  \in \reals_0^+ \}.$$
By Theorem \ref{thm:opt}, the true signal $\bmx^0$ will be the unique maximizer of (PM) if  
\aln{ \label{simple2}
\setD_\Complex \subset \bigcup_i \setC_\Complex(\tilde \bma_i).
}
Let us bound the probability of this event.  Consider the set $\setB = \{\bmdelta\,|\, \la \bmdelta,\bmx^0\ra_\Im  = 0\}.$  We now claim that~\eqref{simple2} holds whenever
\aln{ \label{specialcover}
\setC_\Complex(\hat \bmx)\cap \setB  \subset   \bigcup_i \setC_\Complex(\tilde \bma_i).
}
To prove this claim, consider some $\bmdelta \in \setD_\Complex.$  To keep notation light, we will assume without loss of generality that  $\|\bmx^0\|_2=1.$  Form the vector $\bmdelta' = \bmdelta + j\la \bmdelta, \bmx^0 \ra_\Im \, \bmx^0,$ which is the projection of $\bmdelta$ onto $\setB$.  We can verify that  $\bmdelta' \in \setB$ by writing 
\begin{align*}
\la \bmdelta', \bmx^0 \ra  & = \la \bmdelta , \bmx^0 \ra +\la  j\la \bmdelta, \bmx^0 \ra_\Im \, \bmx^0, \bmx^0 \ra \\
& = \la \bmdelta , \bmx^0 \ra - j\la \bmdelta, \bmx^0 \ra_\Im \la  \bmx^0,\bmx^0 \ra \\
& =  \la \bmdelta ,\bmx^0 \ra - j\la \bmdelta,\bmx^0 \ra_\Im =  \la \bmdelta ,\bmx^0 \ra_\Re,
\end{align*}
which is real valued.  Furthermore, $\bmdelta' \in \setC_\Complex(\hat \bmx)$ because
$$ \la  \bmdelta', \hat \bmx \ra   =  \la \bmdelta , \hat \bmx \ra +\la  j\la \bmdelta,\bmx^0 \ra_\Im\,\bmx^0, 
\hat \bmx \ra= \la \bmdelta , \hat \bmx \ra - j\la \bmdelta,\bmx^0 \ra_\Im \la  \bmx^0, \hat \bmx \ra. $$
The first term on the right is real-valued and non-negative (because $\bmdelta \in \setD_\Complex$), and the second term is complex valued (because $\bmx^0$ is assumed to be aligned with $\hat \bmx$). It follows that $ \la  \bmdelta', \hat \bmx \ra_\Re \ge 0$ and $\bmdelta' \in \setC_\Complex(\hat \bmx).$  Since we already showed that $\bmdelta' \in \setB,$ we have $\bmdelta' \in \setC_\Complex(\hat \bmx)\cap \setB.$
Suppose now that \eqref{specialcover} holds.  The claim will be proved if we can show that  $\bmdelta\in \setD$ is covered by one of the $\setC_\Complex(\tilde \bma_i).$  Since $\bmdelta' \in \setC_\Complex(\hat \bmx)\cap \setB,$ there is some $i$ with $\bmdelta' \in \setC_\Complex(\tilde \bma_i).$  But then
 \aln{ \label{projequiv}
 0 \le \la  \bmdelta', \tilde \bma_i \ra_\Re   =  \la \bmdelta , \tilde \bma_i  \ra_\Re +\la  j\la \bmdelta,\bmx^0 \ra_\Im\, \bmx^0, 
 \tilde \bma_i  \ra_\Re=  \la \bmdelta , \tilde \bma_i  \ra_\Re. 
 }
We see that $\bmdelta \in \setC_\Complex(\tilde \bma_i),$ and the claim is proved.

We now know that exact reconstruction happens whenever condition \eqref{specialcover} holds.  We can put a bound on the frequency of this using Lemma \ref{lem:neighbor}. Note that the sphere $S^{n-1}_\comps$ is isomorphic to $S^{2n-1}_\reals,$ and the set $\setB$ is isomorphic to the sphere $S^{2n-2}_\reals.$ The aligned vectors $\{\tilde \bma_i\}$ are uniformly distributed over a half sphere in $\setC_\Complex(\bmx^0) \cap \setB,$ which is isomorphic to the upper half sphere in $S^{2n-2}_\reals.$  The probability of these vectors covering the cap $\setC_\Complex(\hat \bmx)\cap \setB$ is thus given by $p_\text{cover}(m, 2n-1;\bmx^0,
\hat \bmx)$ from Lemma \ref{lem:neighbor}. We instead use the bound for $p_\text{cover}(m, 2n;\bmx^0,\hat \bmx),$ which is slightly weaker.
\end{proof}

\begin{remark}
Theorems \ref{thm:reconC} and  \ref{thm:realrecon} guarantee exact recovery for a sufficiently large number of measurements~$m$ provided that \mbox{$\ang(\bmx^0,\hat \bmx) < \frac{\pi}{2}.$}  In the case  $\ang(\bmx^0,\hat \bmx) > \frac{\pi}{2},$ our theorems guarantee convergence to $-\bmx^0$ (which is also a valid solution) for sufficiently large $m$.  Our theorems only fail for large $m$ if $\arccos(\bmx^0,\hat \bmx) = \pi/2,$ which happens with probability zero when the approximation vector $\hat \bmx$ is generated at random. See \secref{sec:approx} for more details.
\end{remark}

\section{Generalizations}
\label{sec:generalizations}
Our theory thus far addressed the idealistic case in which the measurement vectors are independently and uniformly sampled from a unit sphere and for noiseless measurements. We now extend our results to more general random measurement ensembles and to noisy measurements. 

\subsection{Generalized Measurement Ensembles}
The theorems of Section \ref{sec:exact} require the measurement vectors $\{\bma_i\}$ to be drawn independently and uniformly from the surface of the unit sphere.  This condition can easily be generalized to other sampling ensembles.  In particular, our results still hold for all {\em rotationally symmetric distributions.}  A distribution~$D$ is rotationally symmetric if the distribution of $\bma/\|\bma\|_2$ is uniform over the sphere when $\bma \sim D.$     For such a distribution, one can make the change of variables $\bma \gets \bma/\|\bma\|_2,$ and then apply Theorems  \ref{thm:reconC} and \ref{thm:realrecon} to the resulting problem.  Note that this change of variables does not change the feasible set for (PM), and thus does not change the solution. Consequently, the same recovery guarantees apply to the original problem without explicitly implementing this change of variables. We thus have the following simple corollary.

\begin{corollary}
The results of Theorem \ref{thm:reconC} and Theorem \ref{thm:realrecon} still hold if the samples $\{\bma_i\}$ are drawn from a multivariate Gaussian distribution with independent and identically distributed (i.i.d.) entries.
\end{corollary}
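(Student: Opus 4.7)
The plan is to verify the two conditions highlighted in the remark preceding the corollary: that the standard Gaussian ensemble is rotationally symmetric, and that the normalization $\bma_i \mapsto \bma_i/\|\bma_i\|_2$ leaves the feasible set of (PM) invariant. The conclusion then follows by applying the previous theorems verbatim to the normalized vectors.

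First, I would argue rotational symmetry. If $\bma\in\setH^n$ has i.i.d.\ real Gaussian entries (or i.i.d.\ complex Gaussian entries with independent real/imaginary parts), then its density depends only on $\|\bma\|_2$, so $\bma$ is invariant in distribution under all orthogonal (respectively, unitary) transformations. It follows that $\bmatilde = \bma/\|\bma\|_2$ is uniformly distributed on $\rsphere{n}$ (respectively, $\csphere{n}$). Independence across $i$ is preserved by the pointwise normalization, and since $\Pr{\|\bma_i\|_2 = 0}=0$, the normalized vectors $\{\bmatilde_i\}_{i=1}^m$ are well-defined almost surely and form an i.i.d.\ collection drawn uniformly from the sphere.

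Next, I would show that the noiseless instance of (PM) is invariant under positive rescaling of each measurement pair $(\bma_i,b_i)$. In the noiseless case $b_i = |\la\bma_i,\bmx^0\ra|$, so defining $\tilde b_i = b_i/\|\bma_i\|_2 = |\la\bmatilde_i,\bmx^0\ra|$ one has
\begin{align*}
|\la\bma_i,\bmx\ra| \le b_i \;\Longleftrightarrow\; |\la\bmatilde_i,\bmx\ra| \le \tilde b_i.
\end{align*}
Hence the feasible set of (PM) with data $\{(\bma_i,b_i)\}$ coincides exactly with that of (PM) with data $\{(\bmatilde_i,\tilde b_i)\}$. Since the objective $\la\bmx,\hat\bmx\ra_\Re$ depends only on $\bmx$, the two problems share the same optimizer, and in particular PhaseMax recovers $\bmx^0$ in one problem if and only if it does so in the other.

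Combining these two observations, the success probability for Gaussian ensembles equals the success probability for the uniform-on-sphere ensemble, to which Theorems \ref{thm:reconC} and \ref{thm:realrecon} apply directly, yielding the stated bounds. The main obstacle is essentially bookkeeping: one must confirm that the normalization is applied consistently to both $\bma_i$ and $b_i$, so that the constraint is truly unchanged. Once this is recognized, no additional probabilistic estimate is required.
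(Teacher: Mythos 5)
Your proposal is correct and follows essentially the same route as the paper: rotational symmetry of the i.i.d.\ Gaussian ensemble makes the normalized vectors uniform on the sphere, and the simultaneous rescaling of $\bma_i$ and $b_i$ leaves the feasible set of (PM) unchanged, so Theorems \ref{thm:reconC} and \ref{thm:realrecon} apply directly. Your write-up simply spells out the measure-zero event $\|\bma_i\|_2=0$ and the constraint equivalence more explicitly than the paper does.
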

\begin{proof}
A multivariate Gaussian distribution with i.i.d.\ entries is rotationally symmetric, and thus the change of variables $\bma \gets \bma/\|\bma\|_2$ yields an equivalent problem with measurements sampled uniformly from the unit sphere.  
\end{proof}

What happens when the distribution is not spherically symmetric?  In this case, we can still guarantee recovery, but we require a larger number of measurements. The following result is, analogous to \thmref{thm:reconC}, for the noiseless complex case.

\begin{theorem}
Suppose that $m_D$ measurement vectors $\{\bma_i\}_{i=1}^{m_D}$ are drawn from the unit sphere with (possibly non-uniform) probability density function $D:S^{n-1}_\comps \to \reals.$  Let $\ell_D\le \inf_{\bmx\in S^{n-1}_\comps}D(\bmx)$ be a lower bound on~$D$ over the unit sphere and let $\alpha=1-\frac{2}{\pi}\ang(\bmx^0,\hat \bmx)$ as above.  We use $s_n = \frac{2\pi^{n}}{\Gamma(n)}$ to denote the ``surface area'' of the complex sphere $\csphere{n},$ and set $m_U=\lfloor m_D s_n \ell_D \rfloor.$  Then, exact reconstruction is guaranteed with probability at least
$$1-\exp\!\left( - \frac{( \alpha m_U -4n)^2}{2m_U} \right)$$
whenever $\alpha  m_U >4n$ and $\ell_D>0$.   In other words, exact recovery with $m_D$ non-uniform measurements happens at least as often as with $m_U$ uniform measurements. 
\end{theorem}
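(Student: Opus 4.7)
The proof plan is to reduce to the uniform-sampling setting of Theorem \ref{thm:reconC} through a mixture-decomposition coupling. Since $D(\bmx) \ge \ell_D$ on all of $\csphere{n}$, the density admits the convex decomposition
\begin{align*}
D(\bmx) = p \cdot U(\bmx) + (1-p) \cdot D'(\bmx),
\end{align*}
where $U \equiv 1/s_n$ is the uniform density on $\csphere{n}$, $p = s_n \ell_D \in (0,1]$, and $D'(\bmx) = (D(\bmx) - \ell_D)/(1-p)$ is itself a valid probability density. Each measurement vector $\bma_i$ can therefore be generated as a two-stage mixture: draw an independent Bernoulli $B_i$ of success probability $p$, then set $\bma_i \sim U$ if $B_i = 1$, and $\bma_i \sim D'$ otherwise. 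The random subset $\{\bma_i : B_i = 1\}$ is an i.i.d.\ uniform sample of size $N_U = \sum_i B_i \sim \text{Binomial}(m_D, p)$, whose mean $m_D p$ is at least $m_U = \lfloor m_D p \rfloor$ by construction.

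Next, I would exploit monotonicity of the PhaseMax recovery event. Appending measurement vectors to (PM) only shrinks its feasible set, and in the noiseless case the true signal $\bmx^0$ remains feasible in every such shrinkage; hence, if the uniqueness certificate of Theorem \ref{thm:opt} is met by some sub-collection of the $\{\bma_i\}$, it is a fortiori met by the full collection. Consequently, the event that PhaseMax recovers $\bmx^0$ from all $m_D$ non-uniform draws contains the event that it recovers $\bmx^0$ from the uniform sub-collection $\{\bma_i : B_i = 1\}$ alone. Conditioning on $N_U$ and the identity of the uniform indices, the latter is a phase-retrieval problem with $N_U$ i.i.d.\ uniform measurements on $\csphere{n}$, to which Theorem \ref{thm:reconC} applies directly, yielding a conditional success probability of at least $1 - \exp\bigl(-(\alpha N_U - 4n)^2/(2 N_U)\bigr)$ whenever $\alpha N_U > 4n$.

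The main obstacle is converting this $N_U$-conditional guarantee into the deterministic bound with $m_U$ in the exponent. A guiding observation is that the function $k \mapsto (\alpha k - 4n)^2/(2k)$ is non-decreasing on $k \ge 4n/\alpha$ (its derivative is $\alpha^2/2 - 8n^2/k^2 > 0$ there), so larger realizations of $N_U$ only tighten the conditional bound; together with $\Ex{N_U} = m_D p \ge m_U$, this is the content of the ``in other words'' dominance statement. I plan to formalize it by a coupling that, on the event $N_U \ge m_U$, exhibits $m_U$ i.i.d.\ uniform measurements contained in the $m_D$ draws and invokes Theorem \ref{thm:reconC} with $m = m_U$ directly, while the residual event $\{N_U < m_U\}$ is absorbed using binomial concentration of $N_U$ around the mean $m_D p$ (which exceeds $m_U$ by less than one). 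Combining the two contributions reduces the problem to the uniform case with exactly $m_U$ samples and yields the advertised lower bound $1 - \exp\bigl(-(\alpha m_U - 4n)^2/(2 m_U)\bigr)$, completing the proof.
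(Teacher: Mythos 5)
Your mixture decomposition $D=pU+(1-p)D'$ with $p=s_n\ell_D$, combined with monotonicity of the recovery event under appending measurements, is a legitimate and genuinely different route from the paper's: the paper instead compares the sampling density of an ordered $m_U$-sub-ensemble under the two models and uses $\binom{m_D}{m_U}\prod_i s_nD(\bma_i)\ge\binom{m_D}{m_U}(s_n\ell_D)^{m_U}\ge(m_Ds_n\ell_D/m_U)^{m_U}\ge 1$ to argue that the non-uniform draw dominates a uniform draw of size $m_U$. Your coupling is more probabilistically explicit, but it discards exactly the combinatorial factor $\binom{m_D}{m_U}$ that powers the paper's ratio bound, and this is where your argument breaks.

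The gap is in the last step. You have $N_U\sim\mathrm{Binomial}(m_D,p)$ with mean $m_Dp$, and $m_U=\lfloor m_Dp\rfloor$ is that mean up to less than one unit. Consequently $\mathbb{P}\left[N_U<m_U\right]$ is \emph{not} small: the median of a binomial lies within one of its mean, so this probability is bounded below by a constant of order $1/2$. No concentration inequality absorbs it. On the event $\{N_U<m_U\}$ your argument certifies nothing, so the bound you actually obtain is $\mathbb{P}\left[N_U\ge m_U\right]\bigl(1-\exp(-(\alpha m_U-4n)^2/(2m_U))\bigr)\approx\tfrac12\bigl(1-\exp(\cdots)\bigr)$, far from the claimed guarantee. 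The alternative of averaging the conditional bound over $N_U$ also fails to recover the stated constant: writing $f(k)=(\alpha k-4n)^2/(2k)$, you would need $\mathbb{E}\bigl[e^{-f(N_U)}\bigr]\le e^{-f(m_U)}$, but $f$ grows with slope $\approx\alpha^2/2$ for $k$ well above $4n/\alpha$, so downward fluctuations of $N_U$ of order $\sqrt{m_Dp(1-p)}$ (which occur with constant probability) inflate $e^{-f(N_U)}$ by a factor $\exp\bigl(\Theta(\alpha^2\sqrt{m_Dp})\bigr)$, and the expectation can exceed $e^{-f(m_U)}$; note also that $e^{-f}$ is not concave on the relevant range, so Jensen does not apply in the direction you need. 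With your coupling the honest conclusion is a weaker statement, e.g.\ the theorem with $m_U$ replaced by $(1-\epsilon)m_Dp$ at the cost of an additive term $\exp(-\epsilon^2m_Dp/2)$; to get the theorem as stated you would need to exploit the overcounting of $m_U$-subsets as the paper does, rather than the single Bernoulli-labelled uniform subpopulation.
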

\begin{proof}
We compare two measurement models, a uniform measurement model in which $m_U$ measurements are drawn uniformly from a unit sphere, and a non-uniform measurement model in which $m_D$ measurements are drawn from the  distribution $D.$ 
Note that the sphere $\csphere{n}$ has surface area $s_n = \frac{2\pi^{n}}{\Gamma(n)},$ and the uniform density function $U$ on this sphere has constant value $s_n^{-1}.$    Consider some collection of measurements $\{\bma_i^U\}_{i=1}^{m_U}$ drawn from the uniform model.  This ensemble of measurements lies in $\reals^{n\times m_U},$  and the probability density of sampling this measurement ensemble (in the space $\reals^{n\times m_U}$) is 
   \aln{ \label{d_uniform}
   m_U! s_n^{-m_U}.
   }
Now consider a random ensemble $\{\bma^D_i\}_{i=1}^{m_D}$ drawn with density $D.$  Given $\{\bma_i^U\}$, the event that $\{\bma_i^U\} \subset \{\bma_i^D\}$ has density (in the space $\reals^{n\times m_U}$)
\aln{\label{d_non}
 \frac{m_D!}{ (m_D-m_U)!} \prod_{i=1}^{m_U} D(\bma_i).
 }
The ratio of the non-uniform density \eqref{d_non} to the uniform density \eqref{d_uniform} is
  \aln{ \label{denseratio}
  {m_D \choose m_U} \prod_{i=1}^{m_U}  s_nD(\bma_i) \ge {m_D \choose m_U}  (s_n \ell_D)^{m_U} \ge \left(\frac{m_D s_n \ell_D}{m_U}\right)^{m_U},
  }
where we have used the bound ${m_D \choose m_U}> (m_D/m_U)^k$ to obtain the estimate on the right hand side.  The probability of exact reconstruction using the non-uniform model will always be at least as large as the probability under the uniform model, provided the ratio \eqref{denseratio} is one or higher. This holds whenever $ m_U  \le m_D s_n \ell_D .$  It follows that the probability of exact recovery using the non-uniform measurements is at least the probability of exact recovery from a uniform model with $m_U=\lfloor m_D s_n \ell_D \rfloor$ measurements.  This probability is what is given by Theorem \ref{thm:reconC}.
\end{proof}

\subsection{Noisy Measurements} \label{sec:noise}

We now analyze the sensitivity of PhaseLift to the measurement noise $\{\eta_i\}.$  For brevity, we focus only on the case of complex-valued signals. To analyze the impact of noise, we re-write the problem (PM) in the following equivalent form: 
\aln{    \label{noisy}
 \left\{\begin{array}{ll} 
\underset{\bmx\in\setH^n}{\maximize} &  \langle \bmx,\hat \bmx \rangle_\Re     \\
\st   & |\la\bma_i,\bmx\ra|^2 \le \hat b_i^2 + \eta_i, \,\, i=1,2,\ldots,m.
\end{array}\right.
}
Here, $\hat b_i^2 = |\la \bma_i,\bmx^0\ra|^2$ is the (unknown) true magnitude measurement and $b_i^2 = \hat b_i^2+\eta_i.$
We are interested in bounding the impact that these measurement errors have on the solution to (PM).  Note that the severity of a noise perturbation of size~$\eta_i$ depends on the (arbitrary) magnitude of the measurement vector~$\bma_i.$  For this reason, we assume the vectors $\{\bma_i\}$ have unit norm throughout this section.  

We will begin by proving results only for the case of non-negative noise.  We will then generalize our analysis to the case of arbitrary bounded noise.  The following result gives a geometric characterization of the reconstruction error. 

\begin{theorem} \label{noisecover}
Suppose the vectors $\{\bma_i \in \comps^n \}$ in \eqref{noisy} are normalized to have unit length, and the noise vector $\boldsymbol\eta$ is non-negative.  Let $r$ be the maximum relative noise, defined by 
\begin{align} \label{eq:relativenoisebound}
r = \max_{i=1,2,\ldots,m}  \left\{\frac{\eta_i}{\hat{b}_i}\right\}, 
\end{align}
and let $\setD_\Complex  = \{\bmdelta \in \csphere{n}|\,  \la \bmdelta, \hat \bmx\ra_\Re \ge 0 , \, \la \bmdelta, \hat \bmx \ra_\Im = 0 \}$ be the set of aligned descent directions.   
Choose some error bound $\varepsilon > r/2,$ and define the angle $\theta = \arccos(r/2\varepsilon).$ If the caps $\{\setC_\Complex(\tilde \bma_i, \theta)\}$ cover $\setD_\Complex$, then the solution $\bmx\opt$ of (PM), and equivalently of the problem in \eqref{noisy}, satisfies the bound 
 $$\|\bmx\opt -\bmx^0\|_2 \le \varepsilon.$$

\end{theorem}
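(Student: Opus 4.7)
The plan is to mimic the exact-recovery proof of \thmref{thm:opt} but with the critical inequalities slackened by an amount proportional to the measurement noise. I would argue by contradiction: assume the optimizer $\bmx\opt$ of \eqref{noisy} satisfies $\|\bmx\opt-\bmx^0\|_2>\varepsilon$, and then show that the normalized error direction $\bmdelta = (\bmx\opt-\bmx^0)/\|\bmx\opt-\bmx^0\|_2$ must lie outside every cap $\setC_\Complex(\tilde{\bma}_i,\theta)$, contradicting the covering hypothesis.

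First, I would establish that $\bmdelta \in \setD_\Complex$. Since $\eta_i\ge 0$, the true signal $\bmx^0$ is feasible for \eqref{noisy}, so the optimizer achieves $\la\bmx\opt,\hat\bmx\ra_\Re \ge \la\bmx^0,\hat\bmx\ra_\Re$. Applying a unit-modulus phase to $\bmx\opt$ preserves feasibility and can only increase the objective, so without loss of generality $\bmx\opt$ is aligned with $\hat\bmx$. Combined with the standing convention that $\bmx^0$ is also aligned with $\hat\bmx$, this forces $\la\bDelta,\hat\bmx\ra$ to be real and non-negative for $\bDelta = \bmx\opt - \bmx^0$, placing $\bmdelta$ in $\setD_\Complex$.

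Next, I would expand the noisy feasibility constraint $|\la\bma_i,\bmx^0+\bDelta\ra|^2 \le \hat b_i^2+\eta_i$. Using $|\la\bma_i,\bmx^0\ra|^2 = \hat b_i^2$, dropping the non-negative quadratic term $|\la\bma_i,\bDelta\ra|^2$, and pulling out the phase via the identity $\la\bma_i,\bmx^0\ra^*\la\bma_i,\bDelta\ra = \hat b_i\,\la\tilde{\bma}_i,\bDelta\ra$ (which follows from $\la\tilde{\bma}_i,\bmx^0\ra = \hat b_i$), I would obtain $[\la\tilde{\bma}_i,\bDelta\ra]_\Re \le \eta_i/(2\hat b_i) \le r/2$ for every $i$. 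This is the noise-robust analogue of the sharp inequality used in the proof of \thmref{thm:opt}.

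Finally, under the working assumption $\|\bDelta\|_2 > \varepsilon$, normalizing yields $[\la\tilde{\bma}_i,\bmdelta\ra]_\Re \le r/(2\|\bDelta\|_2) < r/(2\varepsilon) = \cos\theta$, which says $\bmdelta \notin \setC_\Complex(\tilde{\bma}_i,\theta)$ for any $i$. Combined with $\bmdelta \in \setD_\Complex$, this contradicts the covering hypothesis, forcing $\|\bDelta\|_2 \le \varepsilon$. The main delicacy I expect is the alignment bookkeeping---verifying that $\bmdelta$ really lands in $\setD_\Complex$ (both $\la\bmdelta,\hat\bmx\ra_\Im=0$ and $\la\bmdelta,\hat\bmx\ra_\Re\ge 0$), which requires both the conventional alignment of $\bmx^0$ and the WLOG alignment of $\bmx\opt$; beyond that, the argument is essentially a reworking of \thmref{thm:opt} with the sharp zero on the right-hand side replaced by the noise slack $r/2$, and the corresponding sharp cap angle $\pi/2$ replaced by the smaller angle $\theta$.
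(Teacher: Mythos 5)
Your proposal is correct and follows essentially the same route as the paper: establish $\bmdelta\in\setD_\Complex$ via feasibility of $\bmx^0$ and alignment of both $\bmx^0$ and $\bmx\opt$ with $\hat\bmx$, expand the quadratic constraint to get $\la\tilde\bma_i,\bDelta\ra_\Re\le \eta_i/(2\hat b_i)\le r/2$, and normalize to contradict the cap covering. The only cosmetic difference is that you drop the non-negative term $|\la\tilde\bma_i,\bDelta\ra|^2$ immediately, whereas the paper retains it and solves the resulting quadratic inequality; both yield the same bound $\eta_i/(2\hat b_i)$.
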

\begin{proof}
We first reformulate the problem \eqref{noisy} as
\aln{    \label{errorform}
\quad \left\{\begin{array}{ll} 
\underset{\bDelta \in \comps^n}{\maximize} &  \langle\bmx^0+\bDelta,\hat \bmx \rangle_\Re     \\
\st   & |\la \tilde \bma_i,  (\bmx^0+\bDelta)\ra|^2 \le \hat b_i^2 + \eta_i, \\
&   i=1,2,\ldots,m,
\end{array}\right.
}
where $\bDelta = \bmx\opt - \bmx^0$ is the recovery error vector and $\{\tilde \bma_i\}=\{\phase(\la\bma_i,\bmx^0\ra)\bma_i\}$ are aligned measurement vectors.
In this form, the recovery error vector $\bDelta$ appears explicitly.  Because we assume the errors $\{\eta_i\}$ to be non-negative,  the true signal $\bmx^0$ is feasible for \eqref{noisy}.  It follows that the optimal objective of the perturbed problem \eqref{noisy} must be at least as large as the optimal value achieved by $\bmx^0,$ i.e., $ \langle \bDelta,\hat \bmx \rangle_\Re \ge 0$.  Furthermore, the solution $\bmx^0+\bDelta$ must be aligned with $\hat \bmx,$ as is the true signal $\bmx^0,$ and so  $ \langle \bDelta,\hat \bmx \rangle\in \reals.$  For the reasons just described, we know that the unit vector $\bmdelta =  \bDelta/\|\bDelta\|_2 \in \setD_\Complex.$

Our goal is to put a bound on the magnitude of the recovery error $\bDelta.$  We start by reformulating the constraints in \eqref{errorform} to get 
\begin{align*}
& |\la\tilde \bma_i, (\bmx^0+\bDelta)\ra|^2  = \\
& \quad |\la\tilde \bma_i,\bmx^0\ra|^2 + 2[\la  \tilde \bma_i,\bmx^0\ra^*\la \tilde \bma_i, \bDelta\ra]_\Re  + |\la\tilde \bma_i, \bDelta\ra|^2 \le \hat b_i^2 + \eta_i. 
\end{align*}
Subtracting $|\la\tilde \bma_i,\bmx^0\ra|^2=\hat b_i^2$ from both sides yields
$$  2[\la  \tilde \bma_i,\bmx^0\ra^*\la \tilde \bma_i, \bDelta\ra]_\Re  + |\la\tilde \bma_i, \bDelta\ra|^2 \le  \eta_i. $$
Since $\eta_i$ is non-negative, we have
\aln{ \label{error_bound}
\eta_i &\ge 2[\la  \tilde \bma_i,\bmx^0\ra^*\la \tilde \bma_i, \bDelta\ra]_\Re  + |\la\tilde \bma_i, \bDelta\ra|^2 \notag \\
& = 2\la  \tilde \bma_i,\bmx^0\ra \la \tilde \bma_i, \bDelta\ra_\Re  + |\la\tilde \bma_i, \bDelta\ra|^2 \notag \\
& \ge 2\la  \tilde \bma_i,\bmx^0\ra \la \tilde \bma_i, \bDelta\ra_\Re  + |\la\tilde \bma_i, \bDelta\ra_\Re|^2 . 
}
This inequality can hold only if $ \la \tilde \bma_i, \bDelta\ra_\Re$ is sufficiently small.  In particular we know
\aln{
 \la \tilde \bma_i, \bDelta\ra_\Re & \le -\la\tilde \bma_i,\bmx^0\ra+\sqrt{(\la\tilde \bma_i,\bmx^0\ra)^2+\eta_i} \notag  \\
 & \le \frac{\eta_i}{2\la\tilde \bma_i,\bmx^0\ra} = \frac{\eta_i}{2\hat b_i} \leq \frac{r}{2}. \label{noisebound1}
 }

Now suppose that $\|\bDelta\|_2 > \varepsilon.$  From \eqref{noisebound1} we have 
$$\la \tilde \bma_i, \bmdelta \ra_\Re \le  \la \tilde \bma_i, \bDelta/\|\bDelta\|_2 \ra_\Re <  \frac{r}{2\varepsilon},$$
 Therefore $\bmdelta \not \in \setC_\Complex(\tilde \bma_i, \theta)$ where $\theta = \arccos(r/2\varepsilon).$  This is a contradiction because the caps $\{C_\Complex(\tilde \bma_i, \theta)\}$ cover $\setD_\Complex.$  It follows that $\|\bDelta\|_2 \le \varepsilon.$
 \end{proof}

Using this result, we can bound the reconstruction error in the noisy case.  For brevity, we present results only for the complex-valued case.

\begin{theorem} \label{nonnegnoise}
Suppose the vectors $\{\bma_i\}$ in \eqref{noisy} are independenly and uniformly distributed in $\csphere{n},$ and the noise vector~$\boldsymbol\eta$ is non-negative.  Let $r$ be the maximum relative error defined in \eqref{eq:relativenoisebound}.  Choose some error bound $\varepsilon > r/2,$ and define the angle $\phi = \arccos(r/2\varepsilon) - \ang(\bmx^0, \hat \bmx).$  Then, the solution $\bmx\opt$ to (PM) satisfies
 $$\|\bmx\opt -\bmx^0\| \le \varepsilon.$$
with probability at least
\begin{align*}
&p_\text{cover}(m,2n,\phi) \ge \\
&\,\,\quad 1-   \frac{(em)^{2n}\sqrt{2n-1}}{(4n)^{2n-1}}  \exp\!\left( \!-\frac{\sin^{2n-1}(\phi)(m-n)}{\sqrt{16n}}\right ) \!\cos(\phi) \\
&\,\,\quad - \exp\!\left(\! - \frac{(m-4n+1)^2}{2m-2}  \right)\!
\end{align*}
when $n \ge 5 $ and $m>4n.$

\end{theorem}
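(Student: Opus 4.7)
The plan is to combine the deterministic covering condition from Theorem \ref{noisecover} with the small-cap covering bound from Lemma \ref{smallcapscover}. Theorem \ref{noisecover} already reduces the noisy recovery problem to a purely geometric one: the recovery error satisfies $\|\bmx\opt - \bmx^0\|_2 \le \varepsilon$ whenever the caps $\{\setC_\Complex(\tilde \bma_i, \theta)\}$ with $\theta = \arccos(r/2\varepsilon)$ cover the aligned descent set $\setD_\Complex$. Hence, all that remains is to lower bound the probability of this covering event, and I would do so by two reductions that put the problem into the form addressed by Lemma \ref{smallcapscover}.

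First, I would perform a complex-to-real reduction along the lines of the proof of Theorem \ref{thm:reconC}. Projecting each candidate $\bmdelta \in \setD_\Complex$ onto the real subspace $\setB = \{\bmdelta \in \csphere{n} : \la \bmdelta, \bmx^0 \ra_\Im = 0\}$ via $\bmdelta \mapsto \bmdelta + j \la \bmdelta, \bmx^0 \ra_\Im \bmx^0$ and then verifying that $\la \tilde\bma_i, \bmdelta \ra_\Re$ is preserved under this projection would show that covering $\setD_\Complex \cap \setB$ suffices to cover $\setD_\Complex$. Since $\setB$ is isomorphic to $S^{2n-2}_\Real$, this turns the complex problem into a real covering problem, for which the ambient sphere dimension of $2n-1$ (or the weaker $2n$ used in the statement) is what appears in the final bound.

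Second, I would absorb the angular gap between $\bmx^0$ and $\hat\bmx$ into the cap angle. The aligned measurements $\tilde \bma_i$ are uniform on the half-sphere $\setC_\Complex(\bmx^0)$, whereas the target set is a half-sphere around $\hat \bmx$. Using the spherical triangle inequality, every $\bmdelta \in \setD_\Complex \cap \setB$ lies within angular distance $\ang(\bmx^0, \hat\bmx)$ of some point in $\setC_\Complex(\bmx^0) \cap \setB$, so if the shrunk caps $\{\setC_\Complex(\tilde \bma_i, \phi)\}$ with $\phi = \theta - \ang(\bmx^0, \hat\bmx)$ cover $\setC_\Complex(\bmx^0) \cap \setB$, then the original $\theta$-caps cover $\setD_\Complex$. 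This is what drops the effective central angle from $\theta$ to $\phi$ in the bound.

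The main obstacle is the final step of turning this half-sphere covering event into a full-sphere covering event that Lemma \ref{smallcapscover} can address, because that lemma is stated for caps sampled independently and uniformly on the entire real sphere. I would adapt the hourglass/reflection construction from the proof of Lemma \ref{lem:neighbor}: write each measurement as a uniformly random point on the full complex sphere composed with the alignment map, identify the symmetric ``consistent'' subset $\setA$ on which the alignment acts as an isometry, and argue that uniform coverage of the full sphere by $\phi$-caps in $\setA$ implies coverage of the half-sphere by the aligned $\phi$-caps. Once this symmetrization is in place, Lemma \ref{smallcapscover} applied with dimension $2n$ and angle $\phi$ delivers the stated probability bound; the hypotheses $n \ge 5$ and $m > 4n$ then come directly from Lemma \ref{smallcapscover}'s requirements $2n \ge 9$ and $m > 2(2n)$.
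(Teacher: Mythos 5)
Your skeleton matches the paper's: reduce to the deterministic covering condition of Theorem~\ref{noisecover}, absorb the angle $\ang(\bmx^0,\hat\bmx)$ by shrinking the cap angle from $\theta$ to $\phi$ via the spherical triangle inequality, and then invoke Lemma~\ref{smallcapscover} in real dimension $2n$. The middle step is exactly the paper's argument. The trouble is that both of your auxiliary constructions import semisphere arguments into a small-cap setting, where they break. The projection $\bmdelta\mapsto\bmdelta'=\bmdelta+j\la\bmdelta,\bmx^0\ra_\Im\,\bmx^0$ does preserve $\la\tilde\bma_i,\bmdelta\ra_\Re$, but $\bmdelta'$ is no longer a unit vector ($\|\bmdelta'\|_2^2=1-\la\bmdelta,\bmx^0\ra_\Im^2$). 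For semispheres only the sign of the inner product matters, which is why this works in the proof of Theorem~\ref{thm:reconC}; for a cap of central angle $\theta<\pi/2$, membership of $\bmdelta'/\|\bmdelta'\|_2$ in $\setC_\Complex(\tilde\bma_i,\theta)$ only yields $\la\tilde\bma_i,\bmdelta\ra_\Re>\|\bmdelta'\|_2\cos\theta$, which can be far below $\cos\theta$, so covering the projected set does not imply covering $\setD_\Complex$. The paper needs no projection at all: the set $\setD_\Complex$ of Theorem~\ref{noisecover} already lies in the subsphere $\{\la\bmdelta,\hat\bmx\ra_\Im=0\}$, and a single rotation by $\ang(\hat\bmx,\bmx^0)$ carries it onto the $\bmx^0$-aligned half-sphere containing the cap centers, so the triangle inequality spends the budget $\theta-\phi=\ang(\bmx^0,\hat\bmx)$ exactly once. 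Projecting \emph{and} rotating, as you propose, would force you to spend that budget twice and would degrade $\phi$.

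The same issue afflicts your final step. The hourglass/reflection construction of Lemma~\ref{lem:neighbor} hinges on the inequality $\la\bmdelta,\bma_i'\ra\ge\la\bmdelta,\bma_i\ra\ge 0$, i.e., on preserving only non-negativity of inner products; reflecting the center of a $\phi$-cap with $\phi<\pi/2$ across the equator can move the cap entirely off the point it was covering, so this symmetrization does not show that aligned $\phi$-caps cover the half-sphere whenever unaligned ones cover the full sphere. The paper does not attempt such a construction: it directly asserts that covering a half-sphere with caps centered uniformly on that half-sphere is at least as likely as covering the whole sphere with caps centered uniformly on the whole sphere, and then applies Lemma~\ref{smallcapscover} with dimension $2n-1$ (weakened to $2n$). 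So the genuinely missing ingredient in your write-up is a correct justification of that half-sphere-to-full-sphere comparison for caps of central angle $\phi<\pi/2$; the reflection trick you propose is not it.
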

     \begin{proof}

Define the following two sets:
\begin{align*}
\setD & =  \{\bmdelta \in \csphere{n} \mid  \la\bmdelta, \hat \bmx\ra \in \reals_0^+   \} \\
\setD^0  & = \{\bmdelta \in \csphere{n} \mid  \la\bmdelta,\bmx^0\ra \in \reals_0^+   \}.
\end{align*}
We now claim that the conditions of Theorem \ref{noisecover} hold whenever
\aln{ \label{specialcover2}
\setD^0   \subset   \bigcup_i \setC_\Complex(\tilde \bma_i,\phi)
}
where $\{\tilde \bma_i =\phase(\la\bma_i,\bmx^0\ra)\bma_i\}$ is the set of aligned measurement vectors. 
To prove this claim, choose some $\bmdelta \in  \setD$ and assume that \eqref{specialcover2} holds.  Since the half-sphere $\setD^0$ can be obtained by rotating $\setD$ by a principal angle of $\ang(\hat \bmx,\bmx^0),$ there is some point $\bmdelta^0 \in \setD^0$ with $\ang(\bmdelta,\bmdelta^0)\le \ang(\hat \bmx,\bmx^0).$  
By property \eqref{specialcover2}, there is some cap $\setC_\Complex(\tilde \bma_i, \phi)$ that contains~$\bmdelta_0.$  By the triangle inequality for spherical geometry it follows that:
\begin{align*}
\ang(\bmdelta,\tilde \bma_i) & \le \ang(\bmdelta,\bmdelta^0)+\ang(\bmdelta^0,\tilde \bma_i) \\
& \le \ang(\bmx^0, \hat \bmx)+\phi \le \theta.
\end{align*}
Therefore, $\bmdelta\in \setC_\Complex(\tilde \bma_i, \theta),$ and the claim is proved.

It only remains to put a bound on the probability that~\eqref{specialcover2} occurs.  Note that the aligned vectors $\{\tilde \bma_i\}$ are uniformly distributed in $\setD^0,$ which is isomorphic to a half-sphere in $S^{2n-2}_\reals.$  The probability of covering the half sphere $S^{2n-2}_\reals$ with uniformly distributed caps drawn from that half sphere is at least as great as the probability of covering the whole sphere $S^{2n-2}_\reals$ with caps drawn uniformly from the entire sphere.  This probability is given by Lemma~\ref{smallcapscover} as
$p_\text{cover}(m,2n-1,\phi),$ and is lower bounded by  $p_\text{cover}(m,2n,\phi).$
\end{proof}

We now consider the case of noise that takes on both positive and negative values.  In this case, we bound the error by converting the problem into an equivalent problem with non-negative noise, and then apply Theorem \ref{nonnegnoise}.
\begin{theorem}  \label{generalNoise}
Suppose the vectors $\{\bma_i\}$ in \eqref{noisy} are normalized to have unit length, and that $\eta_i>- \hat b_i^2$  for all $i.$\footnote{This assumption is required so that $b_i^2 = \hat b_i^2+\eta_i > 0$.} Define the following measures of the noise 
\begin{align*}
s^2 = \min_{i=1,2,\ldots,m} \left\{\frac{\hat b_i^2+\eta_i}{\hat b_i^2} \right\}= \min_{i=1,2,\ldots,m} \left\{\frac{b_i^2}{\hat b_i^2}\right\} 
\end{align*}
and
\begin{align*} \quad r = \frac{1}{s}\max_{i=1,2,\ldots,m} \left\{  \hat b_i^2 - s^2\hat b_i+\frac{\eta_i}{\hat b_i}\right\}.
\end{align*}
Choose some error bound $\varepsilon > r/2,$ and define the angle $\phi = \arccos(r/2\varepsilon) - \ang(\bmx^0, \hat \bmx).$  Then, we have the bound
$$\|\bmx\opt -\bmx^0\|_2 \le  \varepsilon + (1-s)\|\bmx^0\|_2. $$
with probability at least
\begin{align*}
& p_\text{cover}(m,2n,\phi) \ge \\
& \qquad 1-   \frac{(em)^{2n}\sqrt{2n-1}}{(4n)^{2n-1}}  \exp\!\left( \!-\frac{\sin^{2n}(\phi)(m-n)}{\sqrt{16n}}\right ) \!\cos(\phi)\\
& \qquad - \exp\!\left(\! - \frac{(m-4n+1)^2}{2m-2}  \right)\!
\end{align*}
when $n \ge 5 $ and $m>4n.$ 

\end{theorem}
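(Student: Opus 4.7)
The plan is to reduce the signed-noise case to the non-negative noise case already treated in Theorem~\ref{nonnegnoise}. The key observation is that PhaseMax only sees the magnitudes $\{b_i\}$ and the approximation $\hat\bmx$; it is blind to which ``true signal'' produced those magnitudes. So we are free to reinterpret the data as arising from a scaled target $s\bmx^0$, chosen so that the induced noise is non-negative, and then apply Theorem~\ref{nonnegnoise} to this auxiliary problem.

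First I would set $s^2 = \min_i b_i^2/\hat b_i^2$, which by construction makes $\eta_i' := b_i^2 - s^2\hat b_i^2 \ge 0$ for every $i$. The observed measurements satisfy $b_i^2 = |\langle\bma_i,s\bmx^0\rangle|^2 + \eta_i'$, which is exactly the measurement model of Theorem~\ref{nonnegnoise} with ``ground truth'' $s\bmx^0$ and non-negative noise $\eta_i'$. Because positive scaling preserves alignment and angles, $\ang(s\bmx^0,\hat\bmx) = \ang(\bmx^0,\hat\bmx)$, so the angular parameter $\phi$ appearing in the covering probability is unchanged by the reinterpretation. A short algebraic computation (substituting $b_i^2 = \hat b_i^2 + \eta_i$ into $\eta_i'/(s\hat b_i)$) shows that the relative noise of the rescaled problem is exactly the quantity $r$ given in the statement, so the hypothesis $\varepsilon > r/2$ transfers verbatim.

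With these identifications in place, Theorem~\ref{nonnegnoise} applied to the auxiliary problem yields $\|\bmx\opt - s\bmx^0\|_2 \le \varepsilon$ with the stated covering probability, since both the angle $\phi$ and the uniform distribution of $\{\bma_i\}$ on $\csphere{n}$ are preserved. The bound claimed in Theorem~\ref{generalNoise} then follows from a single triangle inequality:
\[
\|\bmx\opt - \bmx^0\|_2 \;\le\; \|\bmx\opt - s\bmx^0\|_2 + \|s\bmx^0 - \bmx^0\|_2 \;\le\; \varepsilon + (1-s)\|\bmx^0\|_2.
\]

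The main subtlety, rather than a genuine obstacle, is checking that the rescaling actually produces a legitimate instance of Theorem~\ref{nonnegnoise}: the new noise $\eta_i'$ is non-negative by the minimality of $s^2$; the alignment of $s\bmx^0$ with $\hat\bmx$ is inherited from the alignment convention on $\bmx^0$; and the rescaled relative-noise quantity must be verified to equal the $r$ in the hypothesis. Once these bookkeeping items are confirmed, the result follows immediately from Theorem~\ref{nonnegnoise} plus one triangle inequality, with no new probabilistic analysis required.
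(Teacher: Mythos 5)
Your proposal is correct and follows essentially the same route as the paper: rescale the target to $s\bmx^0$ with $s^2=\min_i b_i^2/\hat b_i^2$ so the induced noise $\zeta_i=\hat b_i^2-s^2\hat b_i^2+\eta_i$ is non-negative, apply Theorem~\ref{nonnegnoise} to the shrunk problem, and finish with the triangle inequality $\|\bmx\opt-\bmx^0\|_2\le\|\bmx\opt-s\bmx^0\|_2+(1-s)\|\bmx^0\|_2$. The bookkeeping items you flag (non-negativity of the new noise, preservation of alignment and of $\ang(\bmx^0,\hat\bmx)$, and the identification of the rescaled relative noise with $r$) are exactly the checks the paper performs.
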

     \begin{proof}
Consider the ``shrunk'' version of problem \eqref{noisy}
\aln{    \label{small}
\quad \left\{\begin{array}{ll} 
\underset{\bmx\in\setH^n}{\maximize} &  \langle \bmx,\hat \bmx \rangle_\Re     \\
\st   & |\la\bma_i,\bmx\ra|^2 \le s^2\hat b_i^2 + \zeta_i, \quad i=1,2,\ldots,m.
\end{array}\right.
}
for some real-valued ``shrink factor'' $s>0.$
Clearly, if $\bmx^0$ is aligned with $\hat \bmx$ and satisfies $|\la\bma_i,\bmx^0\ra| = b_i$ for all $i,$  then $s\bmx^0$ is aligned with $\hat \bmx$ and satisfies $|\la\bma_i,s\bmx^0\ra| = sb_i.$    We can now transform the noisy problem~\eqref{noisy} into an equivalent problem with non-negative noise by choosing
$$s^2 =  \min_{i=1,2,\ldots,m} \left\{\frac{\hat b_i^2+\eta_i }{\hat b_i^2}\right\} \quad  \text{ and } \quad \zeta_i = \hat b_i^2-s^2 \hat b_i^2+\eta_i \ge 0.$$
We then have $(sb_i)^2+\zeta_i = b_i^2 + \eta_i^2,$ and so problem \eqref{small} is equivalent to problem \eqref{noisy}.  However, the noise $\zeta_i$ in problem \eqref{small} is non-negative, and thus we can apply Theorem \ref{nonnegnoise}.  This theorem requires the constant $r$ for the shrunken problem, which is now
\begin{align*}
r_\text{shrunk}&  = \max_{i=1,2,\ldots,m} \left\{ \frac{\zeta_i}{s\hat b_i} \right\}  \\
&  =  \frac{1}{s}\max_{i=1,2,\ldots,m} \left\{  {\hat b_i^2 - s^2\hat b_i+\eta_i/\hat b_i} \right\}\!.
\end{align*}
The solution to the shrunk problem \eqref{small} satisfies $\|\bmx\opt - s\bmx^0\|_2 \le \epsilon,$ with probability $p_\text{cover}(m,2n,\phi),$ where $\phi = \arccos(r_\text{shrunk}/2\epsilon) - \ang(\bmx^0, \hat \bmx).$
 If this condition is fulfilled, then we have
\begin{align*}
\|\bmx\opt -\bmx^0\|_2 & \le  \|\bmx\opt - s\bmx^0 +s\bmx^0   -\bmx^0\|_2  \\
 & \le \|\bmx\opt - s\bmx^0\|_2 +\|s\bmx^0   -\bmx^0\|_2 \\
 & \le  \epsilon + (1-s)\|\bmx^0\|_2, 
 \end{align*}
which concludes the proof.
\end{proof}

\begin{remark}
Theorem \ref{generalNoise} requires the noise to be sufficiently small so that the measurements are non-negative, and the PhaseMax formulation is feasible.
A natural extension that avoids this caveat is to enforce constraints with a hinge penalty rather than a hard constraint.   In addition, it is possible to achieve better noise robustness in this case.  This direction was studied in \cite{hand2016corruption}. 
\end{remark}


\section{How to Compute Approximation Vectors?}
\label{sec:approx}
There exist a variety of algorithms that compute approximation vectors\footnote{Approximation vectors are also known as initialization or anchor vectors.}, such as the (truncated) spectral initializer \cite{netrapalli2013phase,chen2015solving} or corresponding optimized variants \cite{lu2017phase,mondelli2017fundamental},  the Null initializer~\cite{chen2015phase}, the orthogonality-promoting method~\cite{wang2016solving}, or least-squares methods \cite{bendory2016non}. 
We now show that even randomly generated approximation vectors guarantee the success of PhaseMax with high probability given a sufficiently large number of measurements. 
We then show that more sophisticated methods guarantee success with high probability if the number of measurements depends linearly on~$n.$


\subsection{Random Initialization}

Consider the use of approximation vectors $\hat\bmx$ drawn randomly from the unit sphere $S^{n-1}_\reals.$
Do we expect such approximation vectors to be accurate enough to recover the unknown signal?  To find out, we analyze the inner product between two real-valued random vectors on the unit sphere.  Note that we only care about the {\em magnitude} of this inner product. If the inner product is negative, then PhaseMax simply recovers $-\bmx^0$ rather than $\bmx^0$. Our analysis will make use of the following result. 

\begin{lemma} \label{lem:innerproductdistr}
Consider the angle $\beta = \ang(\bmx,\bmy)$ between two random vectors $\bmx,\bmy\in\hsphere{n}$ sampled independently and uniformly from the unit sphere. Then, the expected {\em magnitude} of the cosine distance between the two random vectors satisfies
%
\begin{align}
\sqrt{\frac{2}{\pi n}} \le \mathbb{E}[|\cos(\beta)|]  \le \sqrt{\frac{2}{\pi(n-\frac{1}{2})}},&\,\, \text{ for } \,\setH=\reals \label{eq:realboundX}\\ 
 \sqrt{\frac{1}{\pi n}} \le \mathbb{E}[|\cos(\beta)|]  \le \sqrt{\frac{4}{\pi(4n-1)}}, &\,\, \text{ for }\, \setH=\comps.  \label{eq:complexboundX}
\end{align}
\end{lemma}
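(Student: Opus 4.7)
The plan is to reduce to a one-dimensional integral by rotational invariance, evaluate it in closed form via a Beta function, and then control the resulting ratio of Gamma functions with two classical sharp inequalities (Wendel's and Kershaw's).

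First I would reduce to computing the expected absolute value of a single coordinate. Because the uniform distribution on $\rsphere{n}$ is invariant under orthogonal rotations, I can fix $\bmy = e_1$ without loss of generality, so that $\cos\beta = \la\bmx,\bmy\ra = x_1$ where $\bmx$ is uniform on $\rsphere{n}$. A standard computation (spherical coordinates, or Archimedes' hat-box theorem for $n=3$ and its generalization) gives the marginal density of $x_1$:
\begin{equation*}
f(t) = \frac{1}{B(\tfrac{1}{2},\tfrac{n-1}{2})}\,(1-t^2)^{(n-3)/2}, \quad t\in[-1,1],
\end{equation*}
so that $\mathbb{E}[|x_1|] = 2\int_0^1 t\,f(t)\,dt$. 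The numerator integral is elementary via the substitution $u = 1-t^2$, yielding $\tfrac{1}{n-1}$, while the normalization is $B(\tfrac{1}{2},\tfrac{n-1}{2}) = \sqrt{\pi}\,\Gamma(\tfrac{n-1}{2})/\Gamma(\tfrac{n}{2})$. Combining these,
\begin{equation*}
\mathbb{E}[|\cos\beta|] \;=\; \frac{2}{(n-1)\sqrt{\pi}} \cdot \frac{\Gamma(n/2)}{\Gamma((n-1)/2)}.
\end{equation*}

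Next I would derive the two bounds in \eqref{eq:realboundX} by plugging $z = (n-1)/2$ into tight ratio inequalities for the Gamma function. For the lower bound I would invoke Wendel's inequality $\Gamma(z+\tfrac{1}{2})/\Gamma(z) \ge z/\sqrt{z+\tfrac{1}{2}}$, which upon substitution and simplification yields exactly $\mathbb{E}[|\cos\beta|] \ge \sqrt{2/(\pi n)}$. For the upper bound I would use Kershaw's sharpening of Gautschi's inequality, $\Gamma(z+\tfrac{1}{2})/\Gamma(z) \le z/\sqrt{z+\tfrac{1}{4}}$, which gives $\mathbb{E}[|\cos\beta|] \le \sqrt{2/(\pi(n-\tfrac{1}{2}))}$. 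Pinpointing the right inequalities is the main subtlety: the cruder bound $\Gamma(z+\tfrac{1}{2})/\Gamma(z) \le \sqrt{z}$ obtained from log-convexity of $\Gamma$ only yields $\sqrt{2/(\pi(n-1))}$, which is too loose for the stated constant; the $\tfrac{1}{4}$-shift from Kershaw is precisely what is needed.

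Finally, for the complex case \eqref{eq:complexboundX} I would exploit the isometric identification $\csphere{n} \cong \rsphere{2n}$ obtained by stacking real and imaginary parts: the real part of the complex inner product on $\comps^n$ coincides with the real inner product of the stacked vectors in $\reals^{2n}$, and the uniform distribution on $\csphere{n}$ pushes forward to the uniform distribution on $\rsphere{2n}$. Applying the already-established real bounds with $n$ replaced by $2n$ gives $\sqrt{2/(\pi \cdot 2n)} = \sqrt{1/(\pi n)}$ on the lower side and $\sqrt{2/(\pi(2n-\tfrac{1}{2}))} = \sqrt{4/(\pi(4n-1))}$ on the upper side, matching \eqref{eq:complexboundX} exactly and closing the proof.
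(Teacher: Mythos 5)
Your proposal is correct and follows essentially the same route as the paper: reduce to the marginal density of one coordinate (the paper cites this as the sample-correlation density), evaluate the expectation in closed form as a ratio of Gamma functions, bound that ratio with sharp Wallis-ratio inequalities, and handle the complex case via the isometry $\csphere{n}\cong\rsphere{2n}$ with $n\gets 2n$. The only difference is cosmetic: where the paper cites a survey of Gamma-ratio bounds, you name the specific inequalities (Wendel for the lower bound, the Kershaw/Watson $\tfrac14$-shift for the upper bound), and you correctly identify that the cruder log-convexity bound would not recover the stated constant.
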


\begin{proof}
We first consider the real case.  The quantity $\cos(\beta) =\la \bmx,\bmy \ra / (\|\bmx\|_2\|\bmy\|_2)$ is simply the sample correlation between two random vectors, whose distribution function is given by~\cite{kenney1951}
\aln{ \label{cosdist}
f(z) = \frac{(1-z^2)^{\frac{n-3}{2}}}{2^{n-2}B\!\left(\frac{n-1}{2},\frac{n-1}{2}\right)},
}
where $B$ is the beta function and $z\in[-1,+1]$.  
Hence, the expectation of the magnitude of the inner product is  given by
$$\mathbb{E}[|\cos(\beta)|] = 2  \frac{\int_0^1z(1-z^2)^{\frac{n-3}{2}} \, \text{d}z}{2^{n-2}B\!\left(\frac{n-1}{2},\frac{n-1}{2}\right)} .
$$
The integral in the numerator was studied in \cite[Eq.~31]{jacques2015quantized} and evaluates to  $\half B(1, \frac{n-1}{2}).$  Plugging this expression into \eqref{cosdist}, and using the identity $\Gamma(a)/\Gamma(a/2) = 2^{a-1}\Gamma(\frac{a+1}{2})/\sqrt{\pi},$  we get
   $$\mathbb{E}[|\cos(\beta)|] = \frac{\Gamma(\frac{n}{2})}{\sqrt{\pi}\,\Gamma(\frac{n+1}{2})}.$$
Finally, by using bounds on ratios of Gamma functions~\cite{qi2012bounds}, we obtain the bounds in \eqref{eq:realboundX} for real-valued vectors.  The bounds in \eqref{eq:complexboundX} for complex-valued vectors are obtained by noting that $\csphere{n}$ is isomorphic to~$\rsphere{2n}$ and by simply replacing $n\gets2n$ in the bounds for the real-valued case.
   \end{proof}

For such randomly-generated approximation vectors, we now consider the approximation accuracy $\alpha$ that appears in Theorems~\ref{thm:reconC} and~\ref{thm:realrecon}. Note that $\mathbb{E}[|\beta|] \le \frac{\pi}{2} - \mathbb{E}[|\cos(\beta)|],$ and, thus
$$\mathbb{E}[\alpha] = 1 - \frac{2}{\pi} \mathbb{E}[|\beta|]  \ge  \frac{2}{\pi} \mathbb{E}[|\cos(\beta)|] \ge \sqrt{\frac{8}{\pi^3n}}$$
for the real case. 
Plugging this bound on the expected value for $\alpha$ into Theorem \ref{thm:realrecon}, we see that, for an {\em average} randomly-generated approximation vector (one with $\alpha > \sqrt{8/(\pi^3n)}$), the probability of exact reconstruction goes to $1$ rapidly as $n$ goes to infinity, provided that  the number of measurements satisfies
$ m >  c n^{3/2}$ for any  $c > \sqrt{\pi^3/2}.$   For complex-valued signals and measurement matrices, this becomes $c > 2\sqrt{\pi^3}.$ 
  

%
%
Our results indicate that the use of random approximation vectors requires $O(n^{3/2})$ measurements rather than the $O(n)$ required by other phase retrieval algorithms with other initialization methods, e.g., the ones proposed in~\cite{candes2014solving,wang2016solving}  (see also \secref{sec:discussion}). 
Hence, it may be more practical for PhaseMax to use approximation vectors obtained from more sophisticated initialization algorithms.


\subsection{Spectral Initializers}
\revised{Spectral initializers were first used in phase retrieval by~\cite{netrapalli2013phase}, and enable the computation of an approximation vector $\hat\bmx$ that exhibits strong theoretical properties~\cite{chen2015solving}.  This class of initializers was analyzed in detail in \cite{lu2017phase}, and \cite{mondelli2017fundamental} developed an optimal spectral initializer for a class of problems including phase retrieval. 
}

\revised{Fix some $\delta>1,$ and consider a measurement process with $m=\delta n$ measurement vectors sampled from a Gaussian distribution. Then, the spectral initializers in~\cite{lu2017phase,mondelli2017fundamental} are known to deliver an initializer $\hat \bmx$ with $\lim_{n\to \infty} \ang(\hat \bmx, \bmx^0)>\epsilon$ for some positive $\epsilon.$  Equivalently, there is some accuracy parameter $\alpha_\star = \lim_{n\to\infty} 1-\frac{2}{\pi}\ang(\hat \bmx, \bmx^0) > 0$.
By combining this result with \thmref{thm:reconC}, we see that spectral initializers enable PhaseMax to succeed for large $n$ with high probability provided that $m> n/\alpha_*$ measurements are used for signal recovery\footnote{Our results assume the initial vector is independent of the measurement vectors.  The total number of measurements needed is thus $m> n/\alpha_* + \delta n$ for initialization and recovery combined.};  a number of measurements that is linear in $n$.  Note that similar non-asymptotic guarantees can be achieved for finite $n$ using other results on spectral initializers (e.g., Prop. 8 in \cite{chen2015solving}), although with less sharp bounds.
}

\revised{In Section \ref{sec:tight}, we investigate the numerical value of~$\alpha_*$ and make quantitative statements about how many measurements are required when using a spectral initializer.  As we will see there, PhaseMax requires roughly $m=5.5n$ noiseless Gaussian measurements in theory\footnote{The constants that appear in the bounds \cite{lu2017phase,mondelli2017fundamental} were approximated using stochastic numerical methods, and were not obtained exactly using analysis.}, and somewhat fewer measurements in practice.
}


\section{Discussion}
\label{sec:discussion}
This section briefly compares our theoretical results to that of existing algorithms. We furthermore demonstrate the sharpness of our recovery guarantees  and show the practical limits of PhaseMax. 

\subsection{Comparison with Existing Recovery Guarantees}
\revised{ \tblref{tbl:comparison} compares our noiseless recovery guarantees in a complex system to that of PhaseLift \cite{candes2014solving}, truncated Wirtinger flow (TWF)~\cite{candes2014solving}, and truncated amplitude flow (TAW) \cite{wang2016solving}.  We also compare to the recovery guarantee provided for PhaseMax using classical machine learning methods in \cite{bahmani2017phase}.  \footnote{Since AltMinPhase \cite{netrapalli2013phase} requires an online measurement model that differs significantly from the other algorithms considered here, we omit a comparison.}
We see that PhaseMax requires the same sample complexity (number of required measurements) as compared to PhaseLift, TWF, and TAW, when used together with the truncated spectral initializer \cite{candes2014solving}. While the constants $c_0$, $c_1$, and~$c_2$ in the recovery guarantees  for all of the other methods are generally very large, our recovery guarantees contain no unspecified constants, explicitly depend on the approximation factor $\alpha$, and are surprisingly sharp. We next demonstrate the accuracy of our results via numerical simulations.
}


\begin{table*}[tp]
\centering
\caption{Comparison of Theoretical Recovery Guarantees for Noiseless Phase Retrieval}
\label{tbl:comparison}
\begin{tabular}{@{}llll@{}}
\toprule
Algorithm & Sample complexity & Lower bound on $p_\mathbb{C}(m,n)$  \\
\midrule
PhaseMax &$m> 4n/\alpha$ & $1-e^{- (\alpha m-4n)^2/(2m)}$  \\[0.19cm]
PhaseLift \cite{candes2014solving} & $m\geq c_0 n$ & $1-c_1e^{-c_2 m}$     \\[0.19cm]
TWF \cite{candes2014solving} & $m\geq c_0 n$ & $1-c_1e^{-c_2 m}$    \\[0.19cm]
TAF \cite{wang2016solving} & $m\geq c_0 n$ & $1-(m+5)e^{-n/2}-c_1e^{-c_2 m}-1/n^2$  \\  
Bahmani and Romberg  \cite{bahmani2017phase} & $m> \frac{32}{\sin^4(\alpha)}\log\!\big(\frac{8e}{\sin^4(\alpha)}\big)n$ & $1- 8 e^{-\sin^4(\alpha)\big(M  -  \frac{32}{\sin^4(\alpha)}\log\!\big(\frac{8e}{\sin^4(\alpha)}\big)N\big)/16}$  \\
\bottomrule
\end{tabular}
\end{table*}

\subsection{Tightness of Recovery Guarantees} \label{sec:tight}
We now investigate the tightness of the recovery guarantees of PhaseMax using both experiments and theory.  First, we compare the empirical success probability of PhaseMax in a noiseless and complex-valued scenario with measurement vectors taken independently and uniformly from the unit sphere. 
All experiments were obtained by using the implementations provided in the software library PhasePack \cite{goldstein2017phasepack}.
This software library provides efficient implementations of phase retrieval methods using fast adaptive gradient solvers~\cite{GoldsteinStuderBaraniuk:2014}.  We declare signal recovery a success whenever the relative reconstruction error satisfies
\aln{
\textit{RRE}=\frac{\|\bmx^0-\bmx\|^2_2}{\|\bmx^0\|^2_2} < 10^{-5}.
}
We compare empirical rates of success to the theoretical lower bound in \thmref{thm:reconC}. 
\figref{fig:comparison} shows results for $n=100$ and $n=500$ measurements, where we artificially generate an approximation $\hat\bmx$ for different angles $\beta=\ang(\hat \bmx,\bmx^0)$ measured in degrees. Clearly, our theoretical lower bound accurately predicts the real-world performance of PhaseMax.  For large $n$ and large $\beta$, the gap between theory and practice becomes extremely tight.   We furthermore observe a sharp phase transition between failure and success, with the transition getting progressively sharper for larger dimensions $n$.

\begin{figure*}[tp]
\centering
\subfigure[$n=100$]{\includegraphics[width=0.48\textwidth]{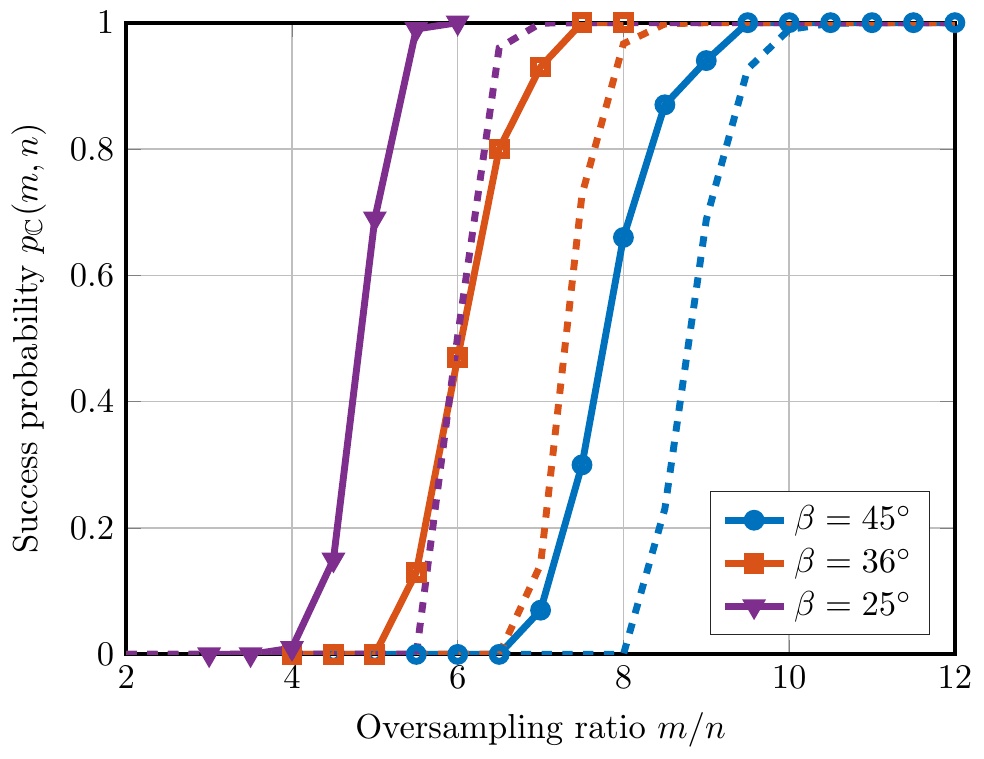}}
\hspace{0.2cm}
\subfigure[$n=500$]{\includegraphics[width=0.48\textwidth]{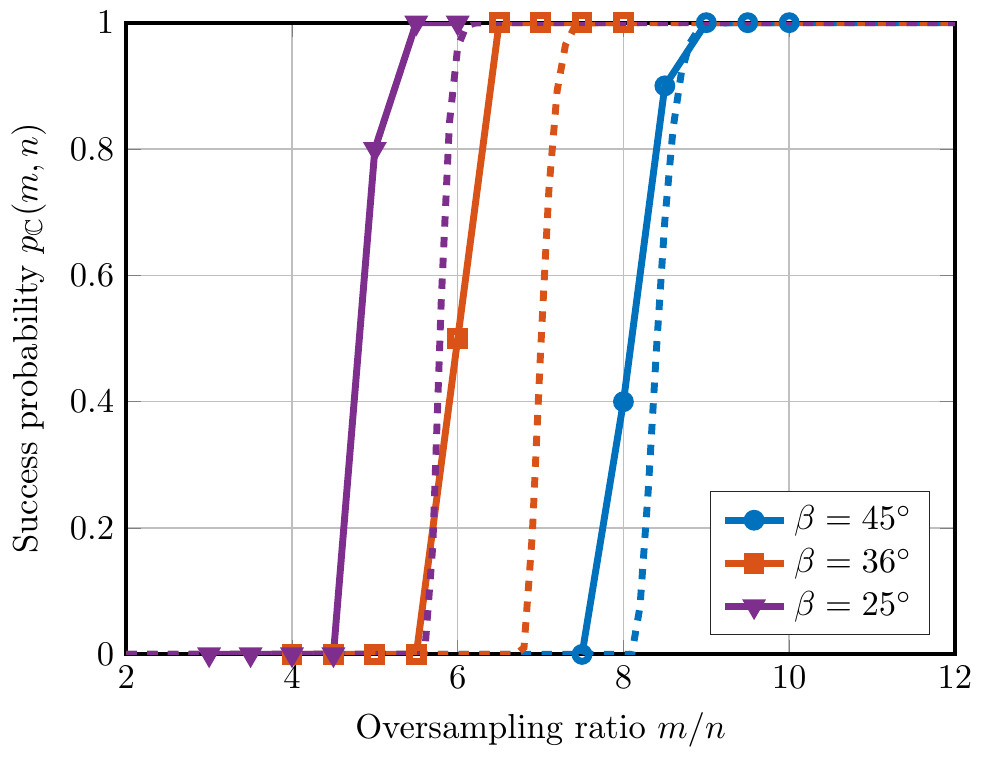}}
\caption{Comparison between the empirical success probability (solid lines) and our theoretical lower bound (dashed lines) for varying angles $\beta$ between the true signal and the approximation vector. Our theoretical results accurately characterize the empirical success probability of PhaseMax. Furthermore, PhaseMax exhibits a sharp phase transition for larger dimensions.}
\label{fig:comparison}
\end{figure*}

\revised{
Next, we investigate the tightness of the recovery guarantees when an initial vector is chosen using the spectral initialization method \cite{lu2017phase,mondelli2017fundamental}.  Using analytical formulas, the authors of~\cite{mondelli2017fundamental} calculate the accuracy of their proposed spectral initializer for real-valued signal estimation from Gaussian measurements with large $n$.  They find that, when $2n$ measurements are used for spectral estimation, the initializer and signal have a squared cosine similarity of over $0.6.$ This corresponds to an accuracy parameter of $\alpha > 0.55,$ and PhaseMax can recover the signal exactly with $m=n/\alpha\approx 3.5 n$ measurements.  This is within a factor of 2 of the information-theoretic lower bound for real-valued signal recovery, which is $2n-2$ measurements.
}

\revised{Note that our analysis of PhaseMax assumes that the measurements used for recovery are statistically independent from those used by the initializer.  For this reason, our theory does not allow us to perform signal recovery by ``recycling'' the measurements used for initialization.  While we do not empirically observe any change in behavior of the method when the initialization measurements are used for recovery, the above reconstruction bounds formally require $5.5n$ measurements ($2n$ for initialization, plus $3.5n$ for recovery).  The bounds in \cite{bahmani2017phase} are uniform with respect to the initializer, and thus enable measurement recycling (although this does not result in tighter bounds for the overall number of measurements because the required constants are larger).}

\revised{
Finally, we would like to mention that asymptotically {\em exact} performance bounds for PhaseMax have recently been derived in \cite{dhifallah2017phase}, and tighter bounds for non-lifting phase retrieval have been shown for the method PhaseLamp, which repeatedly uses PhaseMax within an iterative process.
}

\begin{figure*}[tp]
\centering
\subfigure[$n=100$]{\includegraphics[width=0.48\textwidth]{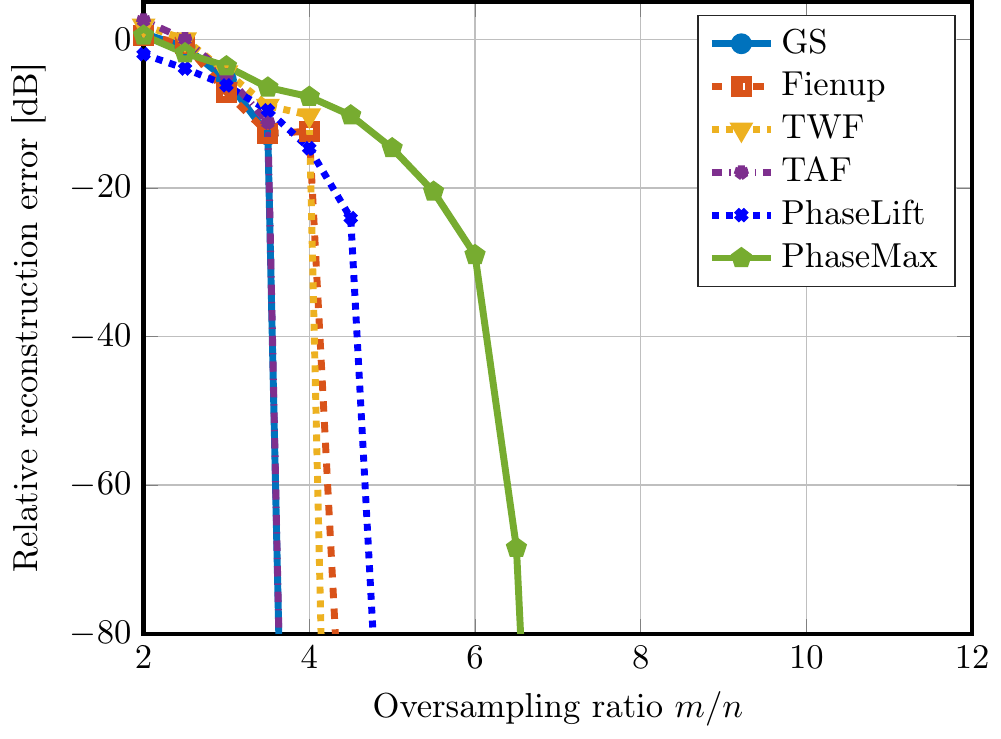}}
\hspace{0.2cm}
\subfigure[$n=500$]{\includegraphics[width=0.48\textwidth]{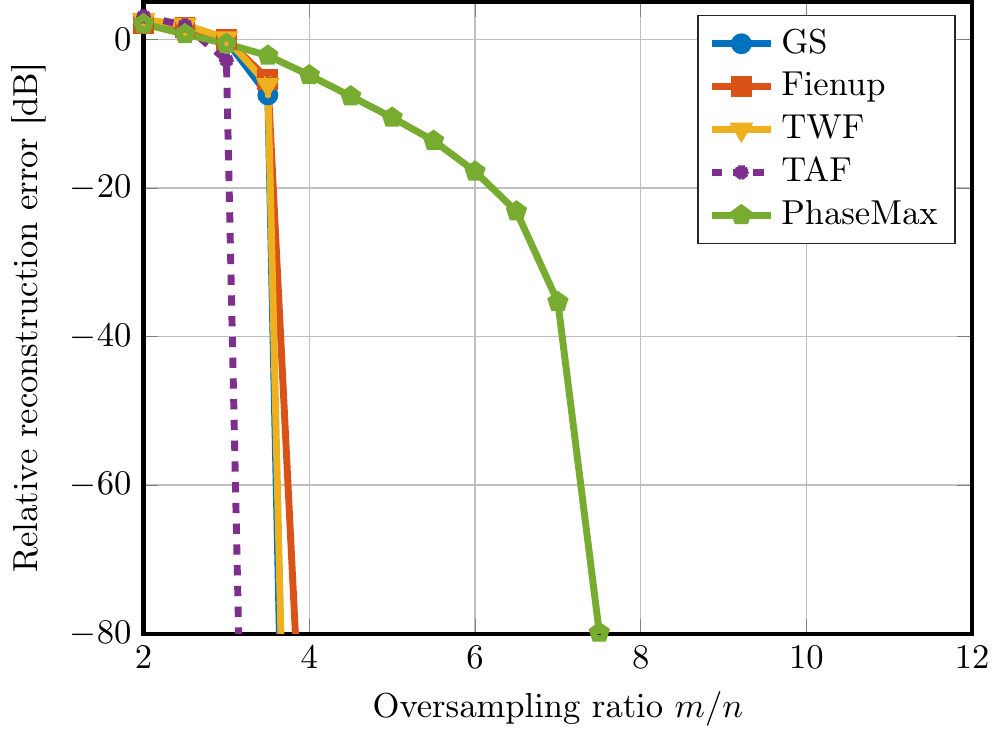}}
\caption{Comparison of the relative reconstruction error. We use the truncated spectral initializer for Gerchberg-Saxton (GS), Fienup, truncated Wirtinger flow (TWF), truncated amplitude flow (TAF), and PhaseMax.  PhaseMax does not achieve exact recovery for the lowest number of measurements among the considered methods, but is convex, operates in the original dimension, and comes with sharp performance guarantees. PhaseLift only terminates in reasonable computation time for $n=100$. }
\label{fig:algorithms}
\end{figure*}

\begin{figure*}[h]
\centering
\begin{minipage}{2.9cm} \centering
Original\vspace{13.5pt}
\includegraphics[width=2.9cm]{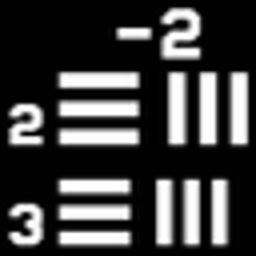}
\end{minipage}
\begin{minipage}{2.9cm} \centering
Gerchberg-Saxton \\ 0.13019\\
\vspace{4pt}
\includegraphics[width=2.9cm]{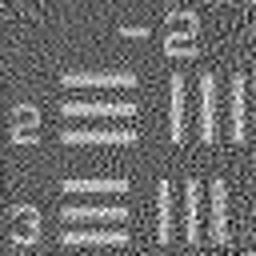}
\end{minipage}
\begin{minipage}{2.9cm} \centering
Wirtinger Flow \\ 0.13930\\ 
\vspace{4pt}
\includegraphics[width=2.9cm]{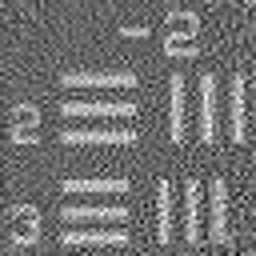}
\end{minipage}
\begin{minipage}{2.9cm} \centering
Trunc. Wirt. Flow\\0.18017\\
\vspace{4pt}
\includegraphics[width=2.9cm]{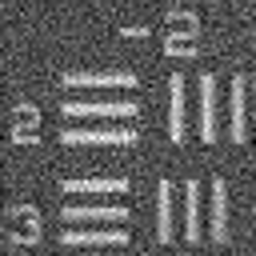}
\end{minipage}
\begin{minipage}{2.9cm} \centering
PhaseMax\\0.23459\\
\vspace{4pt}
\includegraphics[width=2.9cm]{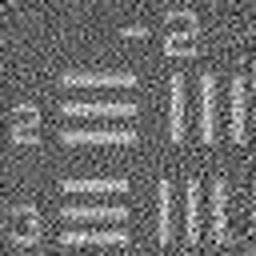}
\end{minipage}
\begin{minipage}{2.9cm} \centering
PhaseLift\\0.35452\\
\vspace{4pt}
\includegraphics[width=2.9cm]{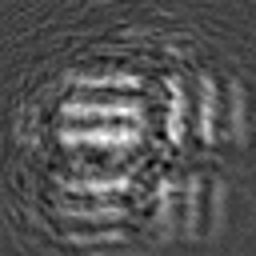}
\end{minipage}


\begin{minipage}{2.9cm} \centering
\vspace{13.5pt} 
\includegraphics[width=2.9cm]{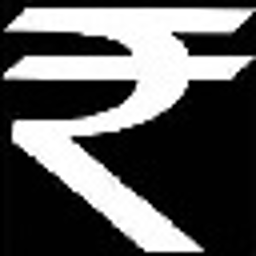}
\end{minipage}
\begin{minipage}{2.9cm} \centering
\vspace{4pt}
0.12468  \vspace{3pt}\\
\includegraphics[width=2.9cm]{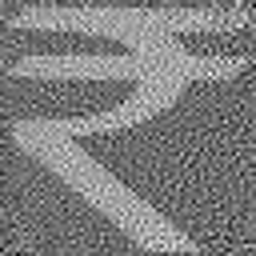}
\end{minipage}
\begin{minipage}{2.9cm} \centering
\vspace{2pt}
0.13386 \vspace{3pt}\\
\includegraphics[width=2.9cm]{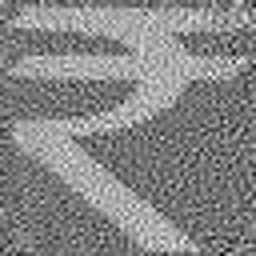}
\end{minipage}
\begin{minipage}{2.9cm} \centering
\vspace{2pt}
0.17836 \vspace{3pt}\\
\includegraphics[width=2.9cm]{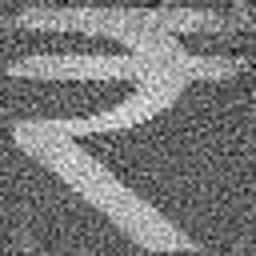}
\end{minipage}
\begin{minipage}{2.9cm} \centering
\vspace{2pt}
0.24947 \vspace{3pt}\\
\includegraphics[width=2.9cm]{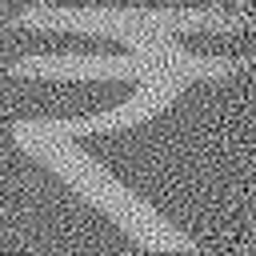}
\end{minipage}
\begin{minipage}{2.9cm} \centering
\vspace{2pt}
0.35282 \vspace{3pt}\\
\includegraphics[width=2.9cm]{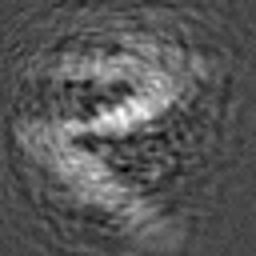}
\end{minipage}

\caption{ Reconstruction of two $64\times64$ masks from empirical phaseless measurements obtained through a diffusive medium \cite{metzler2017coherent}. The numbers on top of each of the recovered images denote the relative measurement error $\||\bA\bmx|-\bmb\|_2/\|\bmb\|_2$ achieved by each method.} 
\label{fig:aft}
\end{figure*}

\subsection{Experimental Results}
We compare PhaseMax to other algorithms using synthetic and empirical datasets.  The implementations used here are publicly available as part of the PhasePack software library~\cite{goldstein2017phasepack}.  This software library contains  scripts for comparing different phase retrieval methods, including scripts for reproducing the experiments shown here.

\subsubsection{Comparisons Using Synthetic Data}
We briefly compare PhaseMax to a select set of phase retrieval algorithms in terms of relative reconstruction error with Gaussian measurements. 
We emphasize that this comparison is by no means intended to be exhaustive and serves the sole purpose of demonstrating the efficacy and limits of PhaseMax  (see, e.g., \cite{waldspurger2015phase,jaganathan2015phase} for more extensive phase retrieval algorithm comparisons). We compare the Gerchberg-Saxton algorithm~\cite{gerchberg1972practical}, the Fienup algorithm~\cite{fienup1982phase},  truncated Wirtinger flow~\cite{chen2015solving}, and PhaseMax---all of these methods use the truncated spectral initializer~\cite{chen2015solving}. 
We also run simulations using the semidefinite relaxation (SDR)-based method PhaseLift~\cite{candes2013phaselift} implemented via FASTA~\cite{GoldsteinStuderBaraniuk:2014}; this is, together with PhaseCut~\cite{waldspurger2015phase}, the only convex alternative to PhaseMax, but lifts the problem to a higher dimension. 

\figref{fig:algorithms} reveals that PhaseMax requires larger oversampling ratios $m/n$ to enable faithful signal recovery compared to non-convex phase-retrieval algorithms that operate in the original signal dimension. This is because the truncated spectral initializer requires oversampling ratios of about six or higher to yield sufficiently accurate approximation vectors~$\hat\bmx$ that enable PhaseMax to succeed.

\subsubsection{Comparisons Using Empirical Data}
The PhasePack library contains scripts for comparing phase retrieval algorithms using synthetic measurements as well as publicly available real datasets.  The datasets contain measurements obtained using the experimental setup described in \cite{metzler2017coherent}, in which a binary mask is imaged through a diffusive medium.

Figure \ref{fig:aft} shows reconstructions from measurements obtained using two different test images.  Both images were acquired using phaseless measurements from the same measurement operator $\bA$.  Reconstructions are shown for the Gerchberg-Saxton~\cite{gerchberg1972practical}, Wirtinger flow \cite{candes2015wirtinger}, and truncated Wirtinger flow~\cite{chen2015solving} methods.  We also show results of the two convex methods PhaseLift (which lifts the problem dimension) \cite{candes2013phaselift} and PhaseMax (the proposed non-lifting relaxation).  For each algorithm, Figure \ref{fig:aft} shows the recovered images and reports the relative measurement error $\||\bA\bmx|-\bmb\|_2/\|\bmb\|_2.$
All phase retrieval algorithms were initialized using a variant of the spectral method proposed in \cite{mondelli2017fundamental}.

We find that the Gerchberg-Saxton algorithm outperforms all other considered methods by a small margin (in terms of measurement error) followed by the Wirtinger flow.  On this dataset, measurements seem to be quite valuable, and the truncated Wirtinger flow method (with default truncation parameters as described in \cite{chen2015solving}) performs slightly worse than the original Wirtinger flow, which uses the full dataset.  PhaseMax produces results that are visually comparable to that of Wirtinger flow, but with slightly higher relative measurement error.

\subsection{Advantages of PhaseMax}
While PhaseMax does not achieve exact reconstruction with the lowest number of measurements, it is convex, operates in the original signal dimension, can be implemented via efficient solvers for Basis Pursuit,  and comes with {\em sharp} performance guarantees that do not sweep constants under the rug (cf.~\figref{fig:comparison}).
The convexity of PhaseMax enables a natural extension to sparse phase retrieval \cite{jaganathan2013sparse,shechtman2014gespar} or other signal priors (e.g., total variation or bounded infinity norm) that can be formulated with convex functions.  Such non-differentiable priors cannot be efficiently minimized using simple gradient descent methods (which form the basis of Wirtinger or amplitude flow, and many other methods), but can potentially be solved using standard convex solvers when combined with the PhaseMax formulation.


\section{Conclusions}
\label{sec:conclusions}
We have proposed a novel, convex phase retrieval algorithm, which we call \emph{PhaseMax}.
We have provided accurate bounds on the success probability that depend on the signal dimension, the number of measurements, and the angle between the approximation vector and the true vector. 
Our analysis covers a broad range of random measurement ensembles and characterizes the impact of general measurement noise on the solution accuracy.
We have demonstrated the sharpness of our recovery guarantees and studied the practical limits of PhaseMax via simulations. 
%

%
There are many avenues for future work. 
Developing a computationally efficient algorithm for solving PhaseMax (or the related PhaseLamp procedure) accurately and for large problems is a challenging problem. 
An accurate analysis of more general noise models is left for future work.



\appendices

\section{Proof of Lemma \ref{realcut}} \label{sec:cutproof}

The proof is by induction.   As a base case, we note that $r(n,1)=2$ for  $n\ge 1$ and $r(2,k) = 2k$ for $k \ge1.$ 
Now suppose we have a sphere $\rsphere{n}$ in $n$ dimensions sliced by $k-1$ planes into $r(n,k-1)$  ``original'' regions.  Consider the effect of adding a $k$th plane,  $\setP_k$.  Every original region that is intersected by $\setP_k$ will split in two, and increase the number of total regions by 1.  The increase in the number of regions is then the number of original regions that are intersected by $\setP_k$.  Equivalently, this is the number of regions formed inside $\setP_k$ by the original $k-1$ planes.  Any subset of $k-1$ planes will have normal vectors that remain linearly independent when projected into $\setP_k.$ By the induction hypothesis, the number of regions formed inside $\setP_k$ is then given by $r(n-1,k-1)$.  Adding this to the number of original regions yields  $$r(n,k)=r(n,k-1)+r(n-1,k-1).$$  We leave it to the reader to verify that \eqref{numslices} satisfies this recurrence relation and base case.

\section{Proof of Lemma \ref{smallcapscover}} \label{sec:smallcap}

In this section, we prove Lemma \ref{smallcapscover}.  This Lemma is a direct corollary of the following result of Burgisser, Cucker, and Lotz \cite{burgisser2010coverage}.  For a complete proof of this result, see Theorem 1.1 of \cite{burgisser2010coverage}, and the upper bound on the constant ``C'' given in Proposition~5.5.

\begin{theorem} \label{smallcapscover_nasty}   Let $m>n\ge 2.$ Then the 
probability of covering the sphere $S^{n-1}_\reals$ with independent and uniform random caps of central angle $\phi \le \pi/2$ is bounded by
\begin{align*} 
& p_\text{cover}(m,n,\phi) \ge \\
& \qquad 1-   {{m}\choose{n}} C\int_{0}^\epsilon (1-t^2)^{(n^2-2n-1)/2} (1-\lambda(t))^{m-n} \, \text{d} t \\
&  \qquad - \frac{1}{2^{m-1}}\sum_{k=0}^{n-1}   {{m-1}\choose{k}}
\end{align*}
where $\lambda(t) = \frac{V_{n-1}}{V_n} \int_0^{\arccos(t)} \sin^{n-2}(\phi)\, \text{d}\phi,$ 
 $V_n=\textit{Vol}(S^{n-1}_\Real) = \frac{2\pi^{n/2}}{\Gamma(n/2)},$
  $C = \frac{n\sqrt{n-1}}{2^{n-1}},$
  and $\epsilon = \cos(\phi).$
\end{theorem}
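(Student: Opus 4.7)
The plan is to derive the stated bound as a direct consequence of Theorem~\ref{smallcapscover_nasty}, by bounding the two penalty terms (the integral term and the binomial-sum term) separately and assembling them.

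First I would handle the binomial-sum term $\frac{1}{2^{m-1}}\sum_{k=0}^{n-1}\binom{m-1}{k}$. This quantity is exactly the probability of obtaining fewer than $n$ heads in $m-1$ independent fair coin flips, whose mean number of heads is $(m-1)/2$. Since we require $m > 2n$, the threshold $n$ lies below the mean, so a one-sided Hoeffding bound applies. Specifically, Hoeffding's inequality yields
\begin{align*}
\frac{1}{2^{m-1}}\sum_{k=0}^{n-1}\binom{m-1}{k}
\le \exp\!\left(-\frac{(m-2n+1)^2}{2(m-1)}\right),
\end{align*}
which matches the second exponential term in the claim verbatim.

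Next, I would bound the integral term $\binom{m}{n} C \int_0^\epsilon (1-t^2)^{(n^2-2n-1)/2}(1-\lambda(t))^{m-n}\,dt$ with $\epsilon=\cos\phi$ and $C=n\sqrt{n-1}/2^{n-1}$. The prefactor is handled by the standard estimate $\binom{m}{n}\le (em/n)^n$, giving
\begin{align*}
\binom{m}{n}C \;\le\; \frac{(em)^n\,\sqrt{n-1}}{(2n)^{n-1}},
\end{align*}
which is exactly the algebraic prefactor appearing in the lemma. Inside the integral, I would bound $(1-t^2)^{(n^2-2n-1)/2}\le 1$ trivially on $[0,1]$, and use $(1-\lambda(t))^{m-n}\le \exp(-\lambda(t)(m-n))$. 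Since $\lambda(t)$ is decreasing in $t$ on $[0,\epsilon]$, its minimum over the integration range is $\lambda(\epsilon)$, the normalized volume of a spherical cap of central angle $\phi$. Evaluating the integral then supplies the factor $\epsilon=\cos\phi$, matching the stated form.

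The crux of the argument, and the step I expect to be the main obstacle, is establishing the inequality
\begin{align*}
\lambda(\epsilon) \;=\; \frac{V_{n-1}}{V_n}\int_0^{\phi}\sin^{n-2}(\psi)\,d\psi
\;\ge\; \frac{\sin^{n-1}(\phi)}{\sqrt{8n}},
\end{align*}
which is what yields the exponent $\sin^{n-1}(\phi)(m-n)/\sqrt{8n}$ in the final bound. I would attack this by the elementary inequality $\sin^{n-2}(\psi)\ge \sin^{n-2}(\psi)\cos(\psi)$ on $[0,\pi/2]$, which integrates exactly to $\sin^{n-1}(\phi)/(n-1)$, reducing the task to showing
\begin{align*}
\frac{V_{n-1}}{V_n(n-1)} \;=\; \frac{\Gamma(n/2)}{\sqrt{\pi}\,(n-1)\,\Gamma((n-1)/2)} \;\ge\; \frac{1}{\sqrt{8n}}.
\end{align*}
This is a ratio-of-gamma-functions estimate, which can be obtained from standard asymptotic bounds such as Gautschi's or Kershaw's inequalities; the constant $\sqrt{8n}$ and the restriction $n\ge 9$ stated in the lemma are exactly what is needed to make this inequality go through. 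Once this gamma-ratio estimate is in hand, combining the two bounded terms with the initial statement of Theorem~\ref{smallcapscover_nasty} yields the claimed lower bound on $p_\text{cover}(m,n,\phi)$.
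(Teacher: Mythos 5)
The statement you were asked to prove is Theorem~\ref{smallcapscover_nasty} itself --- the Burgisser--Cucker--Lotz bound on the probability of covering $S^{n-1}_\reals$ with random caps of central angle $\phi$, including the explicit integral term involving $\lambda(t)$, the constant $C=n\sqrt{n-1}/2^{n-1}$, and the binomial tail. Your proposal does not prove this: it takes Theorem~\ref{smallcapscover_nasty} as given and derives the downstream Lemma~\ref{smallcapscover} from it, so with respect to the target statement the argument is circular (its very first sentence assumes the conclusion). A proof of the theorem would have to reproduce the integral-geometry argument of Burgisser, Cucker, and Lotz: split the event that the caps fail to cover the sphere according to whether the uncovered region is bounded by cap boundaries or is a full intersection of complementary half-spheres, bound the first contribution by an expected-volume computation over $n$-tuples of caps (this is where $\lambda(t)$, the normalized cap volume, and the constant $C$ from their Proposition~5.5 arise), and bound the second by the Wendel/Schl\"afli count $2^{1-m}\sum_{k=0}^{n-1}\binom{m-1}{k}$. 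The paper itself does not reprove this result --- it cites Theorem~1.1 and Proposition~5.5 of \cite{burgisser2010coverage} --- so there is no internal proof to compare against, but an attempt that merely invokes the theorem cannot count as a proof of it.

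For what it is worth, the derivation you do carry out (of Lemma~\ref{smallcapscover} from Theorem~\ref{smallcapscover_nasty}) is correct and tracks the paper's Appendix~B almost step for step: the same Hoeffding bound on the binomial tail, the same estimate $\binom{m}{n}\le(em/n)^n$, and the same lower bound $\lambda(t)\ge\frac{1}{\sqrt{8n}}(1-t^2)^{(n-1)/2}$ obtained by inserting $\cos\psi$ into the integrand and applying a Wallis-ratio bound to $\Gamma(n/2)/\Gamma((n-1)/2)$. Your treatment of the integral is in fact slightly cleaner than the paper's: you bound $\lambda(t)\ge\lambda(\epsilon)=\sin^{n-1}(\phi)/\sqrt{8n}$ uniformly on $[0,\epsilon]$ and pull the resulting exponential out of the integral, leaving only the factor $\epsilon=\cos\phi$, whereas the paper retains the $t$-dependence and needs an extra Cauchy--Schwarz step to reach the same expression. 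But none of this addresses the stated theorem.
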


While Theorem \ref{smallcapscover_nasty} provides a bound on $p_\text{cover}(m,n,\phi),$ the formulation of this bound does not provide any intuition of the scaling of   $p_\text{cover}(m,n,\phi)$ or its dependence on $m$ and $n.$  For this reason, we derive Lemma \ref{smallcapscover}, which is a weaker but more intuitive result.  We restate Lemma \ref{smallcapscover} here for clarity.

\vspace{3mm}
\noindent \textbf{Lemma \ref{smallcapscover}.}   Let $n\ge 9,$ and $m>2n.$ Then, the 
probability of covering the sphere $S^{n-1}_\reals$ with caps of central angle $\phi \le \pi/2$ is lower bounded by
\begin{align*}
& p_\text{cover}(m,n,\phi) \ge \\
& \quad 1-   \frac{(em)^n\sqrt{n-1}}{(2n)^{n-1}}  \exp\!\left(\! -\frac{sin^{(n-1)}(\phi)(m-n)}{\sqrt{8n}}\right )\! \cos(\phi) \\
& \quad - \exp\!\left(\! - \frac{(m-2n+1)^2}{2m-2}  \right)\!.
\end{align*}

\begin{proof} 
Let us simplify the result of Theorem \ref{smallcapscover_nasty}. If we assume $m > 2n,$ then Hoeffding's inequality yields
$$\frac{1}{2^{m-1}}\sum_{k=0}^{n-1}   {{m-1}\choose{k}} \le \exp\left( - \frac{(m-2n+1)^2}{2m-2}  \right)\!.$$

Next, we derive a lower bound as follows:
 \alns{
 \lambda(t)   &=  \frac{\Gamma(n/2)}{\Gamma((n-1)/2)\sqrt{\pi}} \int_0^{\arccos(t)} \sin^{n-2}(\phi)\, \text{d}\phi \\
   &\ge  \sqrt{(n/2-1)/\pi}  \int_0^{\arccos(t)} \sin^{n-2}(\phi)\cos(\phi)\, \text{d}\phi \\
       &=   \sqrt{(n/2-1)/\pi} \frac{1}{n-1} \sin^{n-1}\arccos(t)\\
       & \ge \frac{1}{\sqrt{8n}}(1-t^2)^{(n-1)/2}. 
 }
We have used the fact that $ \sqrt{(n/2-1)/\pi} \frac{1}{n-1}  > \frac{1}{\sqrt{8n}}$ for $n\ge4,$ and also the ``Wallis ratio'' bound $\frac{\Gamma(n/2)}{\Gamma((n-1)/2)} \ge \sqrt{n/2-1}$ \cite{mortici2010new,gautschi1959some}.
Finally, we plug in the inequality  ${{m}\choose{n}} \le \frac{(em)^n}{n^n}$. We now have
\begin{align*}
&{{m}\choose{n}} C\int_{0}^\epsilon (1-t^2)^{(n^2-2n-1)/2} (1-\lambda(t))^{m-n} \, \text{d} t \\ 
&\le  \frac{(em)^n\sqrt{n-1}}{(2n)^{n-1}}\int_{0}^\epsilon (1-t^2)^{(n^2-2n-1)/2} \\
& \quad \qquad \times  \left(1- \frac{1}{\sqrt{8n}}(1-t^2)^{(n-1)/2}\right)^{m-n} \, \text{d} t.
\end{align*}

Now we simplify the integral.  Using the identity $(1-x)^a < e^{-ax},$ which holds for $x\le1,$ we can convert each term in the integrand into an exponential. We do this first with $x=t^2$ and then with $x = \frac{1}{\sqrt{8n}}(1-t^2)^{(n-1)/2}$ to obtain
\begin{multline}
 (1-t^2)^{(n^2-2n-1)/2} \left(1- \frac{1}{\sqrt{8n}}(1-t^2)^{(n-1)/2}\right)^{m-n} \\ 
 \le \exp\!\left(\!-\frac{t^2(n^2-2n-1)}{2}  -\frac{(1-t^2)^{(n-1)/2}(m-n)}{\sqrt{8n}}\right )\!.
 \end{multline}
 We then apply the Cauchy-Schwarz inequality to get
 \begin{align*}
 & \int_{0}^\epsilon \exp\left(-\frac{t^2(n^2-2n-1)}{2}  -\frac{(1-t^2)^{(n-1)/2}(m-n)}{\sqrt{8n}}\right )  \text{d}t \\
 &\qquad  \le  \left[\int_{0}^\epsilon \exp\left(- t^2(n^2-2n-1) \right) \,\text{d}t \right]^{\nicefrac{1}{2}} \\ 
 & \qquad \qquad  \times \left[ \int_{0}^\epsilon \exp\left( -\frac{(1-t^2)^{(n-1)/2}(m-n)}{\sqrt{2n}}\right ) \, \text{d}t\right]^{\nicefrac{1}{2}} \\ 
 & \qquad \le  \left[\epsilon\right]^{\nicefrac{1}{2}}
 \left[\epsilon \exp\!\left(\! -\frac{(1-\epsilon^2)^{(n-1)/2}(m-n)}{\sqrt{2n}}\right ) \right]^{\nicefrac{1}{2}}  \\
 & \qquad =\epsilon \exp\!\left( \!-\frac{(1-\epsilon^2)^{(n-1)/2}(m-n)}{\sqrt{8n}}\right )\!.
 \end{align*}
Replacing the integral with this bound and using the definition $\epsilon = \cos(\phi)$ yields the result.

\end{proof}

\balance

\bibliographystyle{IEEEtran} 
\bibliography{cites}

\begin{thebibliography}{10}
\providecommand{\url}[1]{#1}
\csname url@samestyle\endcsname
\providecommand{\newblock}{\relax}
\providecommand{\bibinfo}[2]{#2}
\providecommand{\BIBentrySTDinterwordspacing}{\spaceskip=0pt\relax}
\providecommand{\BIBentryALTinterwordstretchfactor}{4}
\providecommand{\BIBentryALTinterwordspacing}{\spaceskip=\fontdimen2\font plus
\BIBentryALTinterwordstretchfactor\fontdimen3\font minus
  \fontdimen4\font\relax}
\providecommand{\BIBforeignlanguage}[2]{{%
\expandafter\ifx\csname l@#1\endcsname\relax
\typeout{** WARNING: IEEEtran.bst: No hyphenation pattern has been}%
\typeout{** loaded for the language `#1'. Using the pattern for}%
\typeout{** the default language instead.}%
\else
\language=\csname l@#1\endcsname
\fi
#2}}
\providecommand{\BIBdecl}{\relax}
\BIBdecl

\bibitem{pmlr-v70-goldstein17a}
\BIBentryALTinterwordspacing
T.~Goldstein and C.~Studer, ``Convex phase retrieval without lifting via
  {P}hase{M}ax,'' in \emph{Proceedings of the 34th International Conference on
  Machine Learning}, ser. Proceedings of Machine Learning Research, D.~Precup
  and Y.~W. Teh, Eds., vol.~70.\hskip 1em plus 0.5em minus 0.4em\relax
  International Convention Centre, Sydney, Australia: PMLR, 06--11 Aug 2017,
  pp. 1273--1281. [Online]. Available:
  \url{http://proceedings.mlr.press/v70/goldstein17a.html}
\BIBentrySTDinterwordspacing

\bibitem{candes2013phaselift}
E.~J. Cand\`es, T.~Strohmer, and V.~Voroninski, ``{PhaseLift}: Exact and stable
  signal recovery from magnitude measurements via convex programming,''
  \emph{Commun. Pure Appl. Math.}, vol.~66, no.~8, pp. 1241--1274, 2013.

\bibitem{bahmani2017phase}
S.~Bahmani and J.~Romberg, ``Phase retrieval meets statistical learning theory:
  A flexible convex relaxation,'' in \emph{Intl. Conf. on Artificial
  Intelligence and Statistics (AISTATS)}, May 2017, pp. 252--260.

\bibitem{chen2001atomic}
S.~S. Chen, D.~L. Donoho, and M.~A. Saunders, ``Atomic decomposition by basis
  pursuit,'' \emph{SIAM Rev.}, vol.~43, no.~1, pp. 129--159, Jul. 2001.

\bibitem{chen1994basis}
S.~Chen and D.~Donoho, ``Basis pursuit,'' in \emph{Proc. Asilomar Conf.
  Signals, Syst., Comput.}, vol.~1, Oct. 1994, pp. 41--44.

\bibitem{studer2012signal}
C.~Studer, W.~Yin, and R.~G. Baraniuk, ``Signal representations with minimum
  $\ell_\infty$-norm,'' in \emph{Proc. Allerton Conf. Commun., Contr.,
  Comput.}, Oct. 2012, pp. 1270--1277.

\bibitem{gerchberg1972practical}
R.~W. Gerchberg and W.~O. Saxton, ``A practical algorithm for the determination
  of phase from image and diffraction plane pictures,'' \emph{Optik}, vol.~35,
  pp. 237--246, Aug. 1972.

\bibitem{fienup1982phase}
J.~R. Fienup, ``Phase retrieval algorithms: a comparison,'' \emph{Appl. Opt.},
  vol.~21, no.~15, pp. 2758--2769, Aug. 1982.

\bibitem{harrison1993phase}
R.~W. Harrison, ``Phase problem in crystallography,'' \emph{J. Opt. Soc. Am.
  A}, vol.~10, no.~5, pp. 1046--1055, May 1993.

\bibitem{miao2008extending}
J.~Miao, T.~Ishikawa, Q.~Shen, and T.~Earnest, ``Extending {X-ray}
  crystallography to allow the imaging of noncrystalline materials, cells, and
  single protein complexes,'' \emph{Ann. Rev. Phys. Chem.}, vol.~59, pp.
  387--410, Nov. 2008.

\bibitem{pfeiffer2006phase}
F.~Pfeiffer, T.~Weitkamp, O.~Bunk, and C.~David, ``Phase retrieval and
  differential phase-contrast imaging with low-brilliance {X-ray} sources,''
  \emph{Nat. Phys.}, vol.~2, no.~4, pp. 258--261, Apr. 2006.

\bibitem{kou2010transport}
S.~S. Kou, L.~Waller, G.~Barbastathis, and C.~J. Sheppard,
  ``Transport-of-intensity approach to differential interference contrast
  {(TI-DIC)} microscopy for quantitative phase imaging,'' \emph{Opt. Lett.},
  vol.~35, no.~3, pp. 447--449, Feb. 2010.

\bibitem{faulkner2004movable}
H.~Faulkner and J.~Rodenburg, ``Movable aperture lensless transmission
  microscopy: a novel phase retrieval algorithm,'' \emph{Phys. Rev. Lett.},
  vol.~93, no.~2, Jul. 2004.

\bibitem{holloway2016}
J.~Holloway, M.~S. Asif, M.~K. Sharma, N.~Matsuda, R.~Horstmeyer, O.~Cossairt,
  and A.~Veeraraghavan, ``Toward long-distance subdiffraction imaging using
  coherent camera arrays,'' \emph{IEEE Trans. Comput. Imag.}, vol.~2, no.~3,
  pp. 251--265, Sept. 2016.

\bibitem{fogel2013phase}
F.~Fogel, I.~Waldspurger, and A.~d'Aspremont, ``Phase retrieval for imaging
  problems,'' \emph{Math. Prog. Comp.}, vol.~8, no.~3, pp. 311--335, Sept.
  2016.

\bibitem{candes2015phasea}
E.~J. Cand\`es, E.es, X.~Li, and M.~Soltanolkotabi, ``Phase retrieval from
  coded diffraction patterns,'' \emph{Appl. Comput. Harm.. Anal.}, vol.~39,
  no.~2, pp. 277--299, Sept. 2015.

\bibitem{jaganathan2015phase}
K.~Jaganathan, Y.~C. Eldar, and B.~Hassibi, ``Phase retrieval: An overview of
  recent developments,'' \emph{arXiv:1510.07713}, Oct. 2015.

\bibitem{tian20153d}
L.~Tian and L.~Waller, ``{3D} intensity and phase imaging from light field
  measurements in an {LED} array microscope,'' \emph{Optica}, vol.~2, no.~2,
  pp. 104--111, Feb. 2015.

\bibitem{candes2014solving}
E.~J. Cand\`es and X.~Li, ``Solving quadratic equations via phaselift when
  there are about as many equations as unknowns,'' \emph{Found. Comput. Math.},
  vol.~14, no.~5, pp. 1017--1026, Oct. 2014.

\bibitem{candes2015phase}
E.~J. Cand\`es, Y.~C. Eldar, T.~Strohmer, and V.~Voroninski, ``Phase retrieval
  via matrix completion,'' \emph{SIAM Rev.}, vol.~57, no.~2, pp. 225--251, Nov
  2015.

\bibitem{waldspurger2015phase}
I.~Waldspurger, A.~d'Aspremont, and S.~Mallat, ``Phase recovery, maxcut and
  complex semidefinite programming,'' \emph{Math. Prog.}, vol. 149, no. 1-2,
  pp. 47--81, Feb. 2015.

\bibitem{friedlander2014gauge}
M.~P. Friedlander, I.~Macedo, and T.~K. Pong, ``Gauge optimization and
  duality,'' \emph{SIAM Journal on Optimization}, vol.~24, no.~4, pp.
  1999--2022, 2014.

\bibitem{aravkin2017foundations}
A.~Y. Aravkin, J.~V. Burke, D.~Drusvyatskiy, M.~P. Friedlander, and K.~MacPhee,
  ``Foundations of gauge and perspective duality,'' \emph{arXiv preprint
  arXiv:1702.08649}, 2017.

\bibitem{yurtsever2017sketchy}
A.~Yurtsever, M.~Udell, J.~A. Tropp, and V.~Cevher, ``Sketchy decisions: Convex
  low-rank matrix optimization with optimal storage,'' in \emph{Intl. Conf. on
  Artificial Intelligence and Statistics (AISTATS)}, May 2017.

\bibitem{netrapalli2013phase}
P.~Netrapalli, P.~Jain, and S.~Sanghavi, ``Phase retrieval using alternating
  minimization,'' in \emph{Adv. Neural Inf. Process. Syst.}, 2013, pp.
  2796--2804.

\bibitem{schniter2015compressive}
P.~Schniter and S.~Rangan, ``Compressive phase retrieval via generalized
  approximate message passing,'' \emph{IEEE Trans. Sig. Process.}, vol.~63,
  no.~4, pp. 1043--1055, Feb. 2015.

\bibitem{candes2015wirtinger}
E.~J. Cand\`es, X.~Li, and M.~Soltanolkotabi, ``Phase retrieval via {Wirtinger}
  flow: Theory and algorithms,'' \emph{IEEE Trans. Inf. Theory}, vol.~61,
  no.~4, pp. 1985--2007, Feb. 2015.

\bibitem{chen2015solving}
Y.~Chen and E.~Cand\`es, ``Solving random quadratic systems of equations is
  nearly as easy as solving linear systems,'' in \emph{Adv. Neural Inf.
  Process. Syst.}, 2015, pp. 739--747.

\bibitem{wang2016solving}
G.~Wang, G.~B. Giannakis, and Y.~C. Eldar, ``Solving systems of random
  quadratic equations via truncated amplitude flow,'' \emph{arXiv: 1605.08285},
  Jul. 2016.

\bibitem{wei2015solving}
K.~Wei, ``Solving systems of phaseless equations via kaczmarz methods: A proof
  of concept study,'' \emph{Inverse Problems}, vol.~31, no.~12, p. 125008,
  2015.

\bibitem{zeng2017coordinate}
W.-J. Zeng and H.~So, ``Coordinate descent algorithms for phase retrieval,''
  \emph{arXiv preprint arXiv:1706.03474}, 2017.

\bibitem{yuan2017phase}
Z.~Yuan and H.~Wang, ``Phase retrieval via reweighted {Wirtinger} flow,''
  \emph{Applied optics}, vol.~56, no.~9, p. 2418, 2017.

\bibitem{chen2015phase}
P.~Chen, A.~Fannjiang, and G.-R. Liu, ``Phase retrieval with one or two
  diffraction patterns by alternating projections of the null vector,''
  \emph{arXiv:1510.07379}, Apr. 2015.

\bibitem{lu2017phase}
Y.~M. Lu and G.~Li, ``Phase transitions of spectral initialization for
  high-dimensional nonconvex estimation,'' \emph{arXiv preprint
  arXiv:1702.06435}, 2017.

\bibitem{mondelli2017fundamental}
M.~Mondelli and A.~Montanari, ``Fundamental limits of weak recovery with
  applications to phase retrieval,'' \emph{arXiv preprint: 1708.05932}, 2017.

\bibitem{sun2016geometric}
J.~Sun, Q.~Qu, and J.~Wright, ``A geometric analysis of phase retrieval,''
  \emph{arXiv:1602.06664}, Mar. 2016.

\bibitem{hand2016elementary}
P.~Hand and V.~Voroninski, ``An elementary proof of convex phase retrieval in
  the natural parameter space via the linear program phasemax,'' \emph{arXiv
  preprint: 1611.03935}, 2016.

\bibitem{hand2016compressed}
------, ``Compressed sensing from phaseless {Gaussian} measurements via linear
  programming in the natural parameter space,'' \emph{arXiv preprint:
  1611.05985}, 2016.

\bibitem{hand2016corruption}
------, ``Corruption robust phase retrieval via linear programming,''
  \emph{arXiv preprint: 1612.03547}, 2016.

\bibitem{aghasi2017branchhull}
A.~Aghasi, A.~Ahmed, and P.~Hand, ``{BranchHull:} convex bilinear inversion
  from the entrywise product of signals with known signs,'' \emph{arXiv
  preprint: 1702.04342}, 2017.

\bibitem{dhifallah2017phase}
O.~Dhifallah, C.~Thrampoulidis, and Y.~M. Lu, ``Phase retrieval via linear
  programming: Fundamental limits and algorithmic improvements,'' \emph{arXiv
  preprint: 1710.05234}, 2017.

\bibitem{dhifallah2017fundamental}
O.~Dhifallah and Y.~M. Lu, ``Fundamental limits of phasemax for phase
  retrieval: A replica analysis,'' \emph{arXiv preprint arXiv:1708.03355},
  2017.

\bibitem{schlaefli1953}
L.~Schl\"afli, \emph{{Gesammelte Mathematische Abhandlungen I}}.\hskip 1em plus
  0.5em minus 0.4em\relax Springer Basel, 1953.

\bibitem{wendel1962problem}
J.~G. Wendel, ``A problem in geometric probability,'' \emph{Math. Scand.},
  vol.~11, pp. 109--111, 1962.

\bibitem{gilbert1965probability}
E.~Gilbert, ``The probability of covering a sphere with $n$ circular caps,''
  \emph{Biometrika}, vol.~52, no. 3/4, pp. 323--330, Dec. 1965.

\bibitem{furedi1986random}
Z.~F{\"u}redi, ``Random polytopes in the $d$-dimensional cube,'' \emph{Disc.
  Comput. Geom.}, vol.~1, no.~4, pp. 315--319, Dec. 1986.

\bibitem{burgisser2010coverage}
P.~B{\"u}rgisser, F.~Cucker, and M.~Lotz, ``Coverage processes on spheres and
  condition numbers for linear programming,'' \emph{Ann. Probab.}, vol.~38,
  no.~2, pp. 570--604, 2010.

\bibitem{bendory2016non}
T.~Bendory and Y.~C. Eldar, ``Non-convex phase retrieval from {STFT}
  measurements,'' \emph{IEEE Trans. Inf. Theory}, Aug. 2017.

\bibitem{kenney1951}
J.~F. Kenney and E.~Keeping, \emph{Mathematics of Statistics, Part 2}.\hskip
  1em plus 0.5em minus 0.4em\relax D. Van Nostrand, 1951.

\bibitem{jacques2015quantized}
L.~Jacques, ``A quantized {Johnson--Lindenstrauss} lemma: The finding of
  {Buffon's} needle,'' \emph{IEEE Trans. Inf. Theory}, vol.~61, no.~9, pp.
  5012--5027, Sept. 2015.

\bibitem{qi2012bounds}
F.~Qi and Q.-M. Luo, ``Bounds for the ratio of two gamma functions---from
  {Wendel's} and related inequalities to logarithmically completely monotonic
  functions,'' \emph{Banach J. Math. Anal}, vol.~6, no.~2, pp. 132--158, May.
  2012.

\bibitem{goldstein2017phasepack}
R.~Chandra, Z.~Zhong, J.~Hontz, V.~McCulloch, C.~Studer, and T.~Goldstein,
  ``{PhasePack}: A phase retrieval library,'' \emph{arXiv preprint}, 2017.

\bibitem{GoldsteinStuderBaraniuk:2014}
T.~Goldstein, C.~Studer, and R.~Baraniuk, ``A field guide to forward-backward
  splitting with a {FASTA} implementation,'' \emph{arXiv:1411.3406}, Feb. 2014.

\bibitem{metzler2017coherent}
C.~A. Metzler, M.~K. Sharma, S.~Nagesh, R.~G. Baraniuk, O.~Cossairt, and
  A.~Veeraraghavan, ``Coherent inverse scattering via transmission matrices:
  Efficient phase retrieval algorithms and a public dataset,'' in
  \emph{Computational Photography (ICCP), 2017 IEEE International Conference
  on}.\hskip 1em plus 0.5em minus 0.4em\relax IEEE, 2017, pp. 1--16.

\bibitem{jaganathan2013sparse}
K.~Jaganathan, S.~Oymak, and B.~Hassibi, ``Sparse phase retrieval: Convex
  algorithms and limitations,'' in \emph{Proc. IEEE Int. Symp. Inf. Theory
  (ISIT)}, Jul. 2013, pp. 1022--1026.

\bibitem{shechtman2014gespar}
Y.~Shechtman, A.~Beck, and Y.~C. Eldar, ``{GESPAR:} efficient phase retrieval
  of sparse signals,'' \emph{IEEE Trans. Sig. Process.}, vol.~62, no.~4, pp.
  928--938, Jan. 2014.

\bibitem{mortici2010new}
C.~Mortici, ``New approximation formulas for evaluating the ratio of gamma
  functions,'' \emph{Math. Comput. Model.}, vol.~52, no.~1, pp. 425--433, Jul.
  2010.

\bibitem{gautschi1959some}
W.~Gautschi, ``Some elementary inequalities relating to the gamma and
  incomplete gamma function,'' \emph{J. Math. Phys.}, vol.~38, no.~1, pp.
  77--81, Apr. 1959.

\end{thebibliography}


\begin{IEEEbiographynophoto}{Tom Goldstein} is an Assistant Professor at the University of Maryland in the Department of Computer Science. Before joining the faculty at UMD, Tom completed his PhD at UCLA, and held research positions at Stanford University and Rice University. Tom's research focuses on efficient, low complexity optimization routines. His work ranges from large-scale computing on distributed architectures to inexpensive power-aware algorithms for small-scale embedded systems. Applications of his work include scalable machine learning, computer vision, and signal processing methods for wireless communications. Tom has been the recipient of several awards, including SIAM's DiPrima Prize, and a Sloan Fellowship.
\end{IEEEbiographynophoto}
%


\begin{IEEEbiographynophoto}{Christoph Studer} (S'06--M'10--SM'14) received his Ph.D.\ degree in Electrical Engineering from ETH Zurich in 2009. In 2005, he was a Visiting Researcher with the Smart Antennas Research Group at Stanford University. From 2006 to 2009, he was a Research Assistant in both the Integrated Systems Laboratory and the Communication Technology Laboratory (CTL) at ETH Zurich. From 2009 to 2012, Dr. Studer was a Postdoctoral Researcher at CTL, ETH Zurich, and the Digital Signal Processing Group at Rice University. In 2013, he has held the position of Research Scientist at Rice University. Since 2014, Dr. Studer is an Assistant Professor at Cornell University and an Adjunct Assistant Professor at Rice University. 

Dr. Studer's research interests include signal and information processing as well as the design of digital very large-scale integration (VLSI) circuits. His current research areas include applications in wireless communications, nonlinear signal processing, optimization, and machine learning. 

\vspace{0.08cm}

Dr. Studer received an ETH Medal for his M.S.\ thesis in 2006 and for his Ph.D.\ thesis in 2009. He received a two-year Swiss National Science Foundation fellowship for Advanced Researchers in 2011 and a US National Science Foundation CAREER Award in 2017. In 2016, Dr. Studer won a Michael Tien '72 Excellence in Teaching Award from the College of Engineering, Cornell University. He shared the Swisscom/ICTnet Innovations Award in both 2010 and 2013. Dr. Studer was the winner of the Student Paper Contest of the 2007 Asilomar Conf. on Signals, Systems, and Computers, received a Best Student Paper Award of the 2008 IEEE Int. Symp. on Circuits and Systems (ISCAS), and shared the best Live Demonstration Award at the IEEE ISCAS in 2013.  
\end{IEEEbiographynophoto}


\end{document}